\documentclass[onefignum,onetabnum]{siamart171218}
\usepackage{algorithmic}
\usepackage{algorithm}
\usepackage{bm}
\makeatletter
\renewcommand{\ALG@name}{\sc Algorithm}
\newsiamremark{example}{\sc Example}
\newsiamremark{remark}{\sc Remark}


\title{Sum of Squares Decompositions of Polynomials over their Gradient Ideals
	with Rational Coefficients}
\author{Victor Magron, Mohab Safey El Din, and  Trung-Hieu Vu}
\headers{SOS Decomposition of Polynomials over their Gradient Ideals}{Magron,
	Safey El Din, and Vu}
\ifpdf
\hypersetup{pdftitle={SOS Decomposition},pdfauthor={MSH}}
\usepackage{latexsym,amsmath,amsfonts,amscd,amssymb,verbatim,array,mathdots}

\usepackage{chngcntr}
\counterwithin{equation}{section}
\DeclareMathOperator{\Q}{\mathbb{Q}}
\DeclareMathOperator{\R}{\mathbb{R}}
\DeclareMathOperator{\K}{\mathbb{K}}
\DeclareMathOperator{\C}{\mathbb{C}}
\DeclareMathOperator{\N}{\mathbb{N}}
\DeclareMathOperator{\Z}{\mathbb{Z}}
\DeclareMathOperator{\grad}{grad}
\DeclareMathOperator{\I}{\mathcal{I}}

\DeclareMathOperator{\LT}{LT}

\DeclareMathOperator{\sos}{\mathtt{sos}}
\DeclareMathOperator{\hht}{ht}
\DeclareMathOperator{\quo}{quo}

\DeclareMathOperator{\lc}{lc}
\DeclareMathOperator{\sosone}{\mathtt{univsos1}}
\DeclareMathOperator{\sostwo}{\mathtt{univsos2}}
\DeclareMathOperator{\sosgrad}{\mathtt{sosgradient}}
\DeclareMathOperator{\sosshape}{\mathtt{sosgradientshape}}

\def\x{\bm{x}}
\def\y{\bm{y}}

\usepackage[normalem]{ulem}
\usepackage{todonotes}
\usepackage{xcolor}

\newcommand\msout{\bgroup\markoverwith
	{\textcolor{magenta}{\rule[.5ex]{2pt}{0.4pt}}}\ULon}

\begin{document}
	\maketitle
	\smallskip

	\begin{abstract}
    Assessing non-negativity of multivariate polynomials over the reals, through
    the computation of {\em certificates of non-negativity}, is a topical issue
    in polynomial optimization. This is usually tackled through the computation
    of {\em sums-of-squares decompositions} which rely on efficient numerical
    solvers for semi-definite programming.

    This method faces two difficulties. The first one is that the certificates
    obtained this way are {\em approximate} and then non-exact. The second one
    is due to the fact that not all non-negative polynomials are
    sums-of-squares.

    In this paper, we build on previous works by Parrilo, Nie, Demmel and Sturmfels
    who introduced certificates of non-negativity modulo {\em gradient ideals}.
    We prove that, actually, such certificates can be obtained {\em exactly},
    over the rationals if the polynomial under consideration has rational
    coefficients and we provide {\em exact} algorithms to compute them. We analyze
    the bit complexity of these algorithms and deduce bit size bounds of such
    certificates.
	\end{abstract}

	\begin{keywords}
		Non-negative polynomials, sum of squares decomposition,
		gradient ideal, zero-dimensional and radical ideal, Gr\"{o}bner basis,
		bit complexity
	\end{keywords}

	\begin{AMS}
		14P05, 11E25, 13P10
	\end{AMS}

	\section{Introduction}\label{sec:intro}

	We denote by $\Q$ (resp. $\R$) the field of rational (resp. real) numbers and
  by $\x$ the $n$-tuple of variables $(x_1,\dots,x_n)$. Let $\K$ be a field, we
  denote by $\K[\x]$ the polynomial ring with base field $\K$ and variables
  $\x$. For a polynomial $f$ of degree $d$ in $\Q[\x]$, we consider the
  problem of computing certificates of non-negativity of $f$ over $\R^n$. This
  is a central issue in polynomial optimization.

	\paragraph*{Prior works}
  Computing certificates of non-negativity is usually done by decomposing $f$ as
  a sum of squares (SOS) of polynomials or rational fractions. It is well-known
  that all non-negative univariate polynomials with real coefficients can be
  decomposed as a sum of squares of polynomials. Also, any non-negative
  univariate polynomial $f$ with rational coefficients can be decomposed as a
  {\em weighted} sum of squares with rational coefficients, i.e. $f = \sum_i
  c_i s_i^2$ where $s_i$ has rational coefficients and $c_i$ is a
  positive rational~\cite{landau1906,pourchet1971}. Further, by SOS
  decomposition with rational coefficients, we mean {\em weighted} SOS
  decompositions with rational coefficients. Several algorithms already compute
  such SOS decomposition with rational coefficients of non-negative
  univariate polynomials with rational coefficients (see \cite{schweighofer1999,
    chevillard2011}) and bit complexity and bit size estimates are given in
  \cite{univsos}.

	The multivariate case is more difficult. Following the seminal works by
  \cite{lasserre2001, parrilo2000}, hierarchies of semi\-definite programs yield
  \emph{approximations} of weighted SOS decompositions of positive polynomials.
  Several heuristics have been proposed to lift such approximations to exact SOS
  decompositions of the input polynomial, starting with \cite{peyrl2008} and
  followed by \cite{kaltofen2008,kaltofen2009,kaltofen2012}. Note that
  algorithms in \cite{kaltofen2008,kaltofen2012} allow us to compute SOS
  decompositions on some degenerate examples or compute SOS of rational
  fractions. 
	Complexity issues are studied through the prism of perturbation-compensation
	techniques to compute SOS decompositions in the interior of the SOS cone
	\cite{magron2018ACM,magron2018,magron2021}.

	Still, computing {\em exact} certificates of non-negativity is especially hard
  because of the two following reasons. The first one is that there exist
  non-negative polynomials which are not SOS, for example, Motzkin's polynomial
  and Robinson's polynomial. Moreover, Blekherman proved in
  \cite{blekherman2006} that there are many more non-negative polynomials in
  $\R[\x]$ than SOS polynomials. The second one is that, even if a given
  polynomial with rational coefficients is SOS, there is no guarantee that there
  exists an SOS decomposition involving rational coefficients, as established
  in~\cite{scheiderer2016}.

	Alternative certificates of non-negativity, for instance, SAGE/SONC
	polynomials
	\cite{magron2019,wang2020second} can also be used but they face similar
	issues
	to the ones met by SOS techniques when it comes with generality.

  Deciding non-negativity over an arbitrary semi-algebraic set of a polynomial
  $f\in \Q[\x]$ can be done exactly using computer algebra algorithms. The best
  complexities for such a decision procedure are achieved by algorithms making
  effective the so-called critical point method \cite{grigor1988,basu1998},
  further practical developments in \cite{bank2010,bank2001,bank2005,safey2003}
  and their applications in polynomial optimization in
  \cite{greuet2012,greuet2014,BGHS}. Note that, even if these algorithms are
  exact (i.e. their results are exact provided that no bug has been
  encountered), they do not provide a certificate assessing non-negativity which
  can be checked a posteriori since they are root-finding algorithms. Their
  complexities are exponential in the dimension of the ambient space as they
  reduce the input problem to computing finitely many critical points of some
  well-chosen maps, hence considering \emph{gradient ideals}.

  Hence, all in all, such gradient ideals can be used to reduce the dimension of
  the set over which certifying non-negativity can be done. Under some
  assumptions, this idea is translated in \cite{parrilo2002} to an algorithm
  assessing the non-negativity of a given $f\in \R[\x]$. Precisely, assuming the
  gradient ideal $\I_{\grad}(f)$ generated by all partial derivatives of $f$ is
  zero-dimensional and radical, and that $f$ reaches its minimum over $\R^n$,
  this algorithm computes an SOS decomposition of $f$ in the quotient algebra
  $\R[\x]/\I_{\grad}(f)$ (or, in other words, an SOS decomposition of $f$ modulo
  $\I_{\grad}(f)$), i.e., $f$ is written as
		\[
		c_1 s_1^2+\cdots+c_ks_k^2 +
		\sum_{i=1}^nq_i \frac{\partial f}{\partial x_i}
		\] where the $s_i$'s and the $q_i$'s lie in $\R[\x]$ and the $c_i$'s are
		positive in $\R$. A similar result slightly
		relaxing the above assumptions is given in \cite{nie2006}. Note that
		when $f$
		has coefficients in $\Q$, there is no given guarantee that an SOS
		decomposition
		of it in $\Q[\x]/\I_{\grad}(f)$ will have rational coefficients too
		(i.e., the
		$s_i$'s and the $q_i$'s have coefficients in $\Q$ and the $c_i$'s lie
		in $\Q$).


	\paragraph*{Contributions} We build on the results of \cite{parrilo2002,
			nie2006}, to investigate this issue when $f\in \Q[\x]$. We
			assume in
		the whole paper that the gradient ideal associated to $f$ is radical and
		zero-dimensional and that $f$ reaches its infimum over $\R^n$. We
		summarize
	our contributions as follows:
	\begin{itemize}
		\item Under the above assumptions, we prove (Theorem \ref{thm:main})
		that $f$
		is non-negative over $\R^n$ if and only if $f$ is an SOS of polynomials
with rational coefficients over the quotient ring
		$\Q[\x]/\I_{\grad}(f)$.
		Interestingly, Theorem \ref{thm:main} can be applied to
		Robinson's polynomial \cite{robinson73}, which is not an SOS of
		polynomials
		(see Example~\ref{ex:Rob}), as well as Scheiderer's polynomial
		\cite{scheiderer2016}, which is an SOS of polynomials with real
		coefficients
		but not an SOS of polynomials with rational coefficients (see
		Example~\ref{ex:Schei}).
  \end{itemize}
  The next problem we tackle is to design algorithms computing such certificates
  of non-negativity, estimate their bit complexity and the bit size of such
  certificates.

  {To measure the \textit{bitsize} of a polynomial with rational
    coefficients, we will use its height, defined as follows.} The bitsize of an
  integer $b$ is denoted by $\hht(b):=\lfloor\log_2(|b|)\rfloor+1$ with
  $\hht(0):= 1$, where $\log_2$ is the logarithm in base $2$. Given $a\in\Z$ and
  $b\in\Z$ with $b\neq 0$ and $\gcd(a,b)=1$, we define $\hht\left
    (\frac{a}{b}\right )=\hht(a)+\hht(b)$. For a non-zero polynomial $f$ with
  rational coefficients, the bitsize $\hht(f)$ is defined as the maximal bitsize
  of the non-zero coefficients of $f$.
	For two mappings
	$p,q:\N^m\to \R$, the expression
	``$p(v)=O(q(v))$" means that there exists $b\in \N$ such that $p(v)\leq b
	q(v)$, for all $v\in \N^m$. We use the notation $p(v)= \widetilde O (q(v))$
	in
	order to indicate that $p(v)=O(q(v)\log^kq(v))$ for some $k\in
	\N$.

  \begin{itemize}
  \item From the proof of Theorem \ref{thm:main}, we derive an algorithm
    (Algorithm \ref{alg:SOSdecom}), named $\sosshape$, to compute an SOS
    decomposition of polynomials modulo the gradient ideal of $f$. This
    algorithm can certify non-negativity of polynomials which cannot
    be tackled with a direct SOS approach. We also investigate the bit
    complexity of $\sosshape$. We prove that, given as input an $n$-variate
    polynomial $f\in \Q[\x]$ of degree $d$ with maximal bitsize of its
    coefficients $\tau$, $\sosshape$ uses $$ \footnotesize{\widetilde
      O}((\tau+n+d)^2d^{6n}+ (\tau+n+d)d^{6n+4})$$ boolean operations.
This is better than the complexity estimates given
			in \cite[Theorem 12]{magron2021}, where the reported number of
			boolean
			operations is:
			$ \footnotesize{\widetilde O}(\tau^2 (4 d + 2)^{15 n + 6})$.
		\item We design a variant of Algorithm $\sosshape$, named
			$\sosgrad$, and which, on input $f\in \Q[\x]$ as above,
			decomposes it as a sum of \emph{rational fractions} modulo the
			gradient
			ideal associated to $f$. We prove that this variant uses
			\[
			\footnotesize{\widetilde O}\left((\tau+n+d)d^{4n+4}\right).
			\]
			boolean operations and, consequently, has better complexity than
      Algorithm $\sosshape$.
	\end{itemize}
  Both algorithms have been implemented using the {\sc Maple} computer algebra
  system. We report on practical experiments showing that it can already assess
  the non-negativity of numerous polynomials which are out of reach of, e.g.,
  hybrid methods computing sums of squares decompositions such as
  \cite{magron2018ACM}.

	\paragraph*{Structure of the paper}
	In the next section, we
	recall basic notions and fundamental results used in the paper. In Section
	\ref{sec:pol}, we prove the existence of an SOS decomposition of polynomials
	modulo the gradient ideal of $f$, introduce Algorithm
	$\sosshape$ and analyze its bit complexity. The results for
	decomposing $f$ as an SOS of rational fractions modulo the gradient ideal
	are
	presented in Section \ref{sec:rat}. Practical experiments are given in the
	last section.

	\section{Preliminaries}\label{sec:pre}


	This section recalls basic notions and
	results from algebraic geometry, computational commutative algebra, and
	complexity analysis.
Further details can be found in \cite{cox2013}.

Let $\K$ be a field. An additive subgroup $\I$ of $\K[\x]$ is said
	to be an \textit{ideal} of $\K[\x]$ if $hg\in \I$ for any $h\in \I$ and
	$g\in\K[\x]$. 	Given $g_1,\dots,g_r $ in $ \K[\x]$, we denote by
	$\left\langle
	g_1,\dots,g_r \right\rangle$ the ideal generated by $g_1,\dots,g_r$. If $\I$
	is an ideal of $\K[\x]$ then, according to Hilbert's basis theorem
	(see,
	e.g.,
	\cite[Theorem~4]{cox2013}),
there exist $g_1,\dots,g_r \in \K[\x]$ such that
	$\I=\left\langle g_1,\dots,g_r \right\rangle$.

Let $\I$ be an ideal of $\R[\x]$. The
		algebraic variety associated to $\I$ is defined as
	$$V(\I):=\{x\in{\C}^n:\forall g\in \I, g(\x)=0\}.$$ The real algebraic
	variety
	associated to $\I$ is $V^{\R}(\I)=V(\I)\cap\R^n$. Recall that the ideal $\I$
	is \textit{zero-dimensional} if the
	cardinality of $V(\I)$ is finite, and that $\I$ is \textit{radical} if
	$$ g^k\in \I \text{ for some }k \in \N\Longrightarrow g \in \I.$$

	Let $f$ be a polynomial in $\R[\x]$.
	Recall that the \textit{gradient ideal} $\I_{\grad}(f)$ of $f$ is the ideal
	generated by all partial derivatives of $f$ in $\R[\x]$, i.e.,
	\begin{equation*}\label{f:partial}
		\I_{\grad}(f):=\left\langle\frac{\partial f}{\partial x_1} \,,\dots,
		\,\frac{\partial f}{\partial x_n} \right\rangle.
	\end{equation*}
	The (resp. \textit{real}) \textit{gradient variety}
associated to $f$ is respectively the (resp. real) algebraic variety
		associated to
		$\I_{\grad}(f)$. We denote them respectively by $V_{\grad}(f)$ and
	$V^{\R}_{\grad}(f)$. Let $\K$ be a real field contained in $\R$. One
	says that $f$ is a {(weighted)} \textit{sum of
		squares} (SOS) of polynomials in $\K[\x]$ if there exist polynomials
	$q_1,\dots,q_s$ in $\K[\x]$ and positive numbers $c_1,\dots,c_s$ in $\K$ such that
	$f=\sum_{j=1}^sc_jq_{j}^{2}$. Furthermore,
	$f$ is an SOS of
	polynomials over the quotient ring $\K[\x]/\I_{\grad}(f)$ if there exists
	$g\in \I_{\grad}(f)$ such that $f-g$ is SOS in $\K[\x]$, i.e., $f$ can be
	decomposed as follows:
	\begin{equation*}\label{f1:SOS}
		f=\sum_{j=1}^sc_jq_{j}^{2}+
		\sum_{i=1}^{n}\phi_i\frac{\partial f}{\partial x_i},
	\end{equation*}
	for some polynomials $q_1, \dots, q_s, \phi_1, \dots, \phi_s$ in $\K[\x]$ and positive numbers $c_1,\dots,c_s$ in $\K$ .

	Clearly, if $f$ is SOS over $\R[\x]/\I_{\grad}(f)$ then $f$ is non-negative
	over $V^{\R}_{\grad}(f)$. We recall below
	\cite[Theorem 1]{parrilo2002}.

	\emph{Let $f$ be a polynomial in $\R[\x]$. Suppose that the gradient ideal
		$\I_{\grad}(f)$ is zero-dimensional and radical. Then, $f$ is
		non-negative
		over
		$V^{\R}_{\grad}(f)$ if and only if $f$ is SOS over the quotient ring
		$\R[\x]/\I_{\grad}(f)$.}

	{We now recall useful results in the univariate case. It is well-known that
		$f\in \R[t]$ is non-negative over $\R$ if and only if $f$ is SOS. This
		property holds also for polynomials with coefficients in a subfield
		$\K$ of
		$\R$}. More precisely, we have the following theorem:

	\begin{theorem}[\cite{landau1906,pourchet1971}]\label{w_SOS}
	Let $\K$ be a subfield
		of $\R$ and $f\in \K[t]$. Then, $f$
		is non-negative over $\R$ if and only if $f$ admits a weighted SOS
		decomposition of polynomials in $\K[t]$, i.e., there exists a positive
		integer $s$,  non-negative numbers $c_1,\dots, c_s\in \K$ and
		polynomials
		$g_1,\dots,g_s\in \K[t]$, such that $f=\sum_{j=1}^sc_j g_{j}^{2}$.
	\end{theorem}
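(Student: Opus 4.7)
My plan is to address the two implications separately. The ``if'' direction is immediate: if $f = \sum_{j=1}^s c_j g_j^2$ with each $c_j \in \K$ non-negative and $g_j \in \K[t]$, then at every real point each summand is non-negative, hence $f \geq 0$ on $\R$. The substantive content lies in the converse, which I would prove by factoring $f$ into irreducibles over $\K[t]$ and treating each factor according to whether or not it has real roots.

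Assume $f \geq 0$ on $\R$ and write $f = c \prod_i p_i(t)^{e_i}$, where $c \in \K^\times$ and the $p_i$ are pairwise distinct monic irreducibles in $\K[t]$. Each $p_i$ is separable (being irreducible over a field of characteristic zero), so its real roots, if any, are simple; since $f$ cannot change sign on $\R$, every $p_i$ having a real root must occur with even multiplicity $e_i$. Collecting those factors into a single polynomial $G \in \K[t]$ whose square absorbs all real-rooted contributions, I reduce to showing that $c \cdot h$ is a weighted SOS over $\K$, where $h = \prod_j p_j(t)^{l_j}$ is a product of irreducibles without real roots. Each such $p_j$ is continuous and nowhere zero on $\R$, hence of constant sign, so $c \cdot h$ itself has constant sign on $\R$; since $f/G^2 = c \cdot h$ is $\geq 0$ wherever $G \neq 0$ and therefore everywhere by continuity, this common sign must be $\geq 0$. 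Absorbing signs of individual $p_j$ into the scalar $c$, I may further assume $c > 0$ and each $p_j > 0$ on $\R$. When $l_j$ is even, $p_j^{l_j}$ is already a perfect square and can be merged into $G^2$, so only the odd-exponent factors require further treatment.

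The remaining and hardest step is to show that a product of the form $c \prod_j p_j$ with $c \in \K_{>0}$ and each $p_j \in \K[t]$ a monic irreducible of even degree without real roots admits a weighted SOS decomposition in $\K[t]$. I would invoke the Brahmagupta--Fibonacci two-squares identity, which, after renormalization, shows that the set of weighted sums of two squares in $\K[t]$ is closed under multiplication, reducing the problem to the single-factor case. The final ingredient---that an irreducible $p \in \K[t]$ of even degree without real roots is itself a weighted sum of two squares in $\K[t]$---is the nontrivial arithmetic content of the theorems of Landau and Pourchet. Over $\R$ the factorization $p(t) = \prod_k ((t-\beta_k)^2 + \gamma_k^2)$ makes such a decomposition transparent, but forcing the entries to remain inside the subfield $\K$ requires an analysis of norm forms on the residue field $\K[t]/(p)$ and is precisely the main obstacle; at this point I would appeal to the cited references rather than reprove it here, then assemble the decomposition of $f$ from the contribution $G^2$ and the weighted SOS of $c \prod_j p_j$ via one last application of the multiplicative identity.
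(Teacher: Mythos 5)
The paper itself does not prove this statement---it is quoted from Landau and Pourchet---so the only question is whether your argument stands on its own, and it does not: the arithmetic core of your plan is both misstated and not proved. Your reduction steps are fine (real-rooted irreducible factors must occur to even multiplicity, even powers go into $G^2$, signs are absorbed into $c$, and one is left with $c\prod_j p_j$ with $c>0$ and the $p_j$ distinct positive-definite irreducibles). But the lemma you then defer to the references---that a positive-definite irreducible $p\in\K[t]$ is a weighted sum of \emph{two} squares in $\K[t]$---is false. If $p=c_1g_1^2+c_2g_2^2$ with $c_1,c_2>0$, then the leading terms cannot cancel, so $\deg g_i<\deg p$ and, reducing modulo $p$, $-c_1c_2$ must be a square in the residue field $\K[t]/(p)$; for $\K=\Q$ this forces the root field to contain an imaginary quadratic subfield $\Q(\sqrt{-d})$. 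Take $p=t^4+t+1$: it is positive definite and irreducible over $\Q$, and its Galois group is $S_4$, so the quartic field $\Q[t]/(p)$ has no quadratic subfield at all; hence $p$ is not a weighted sum of two squares in $\Q[t]$. Moreover Landau and Pourchet do not assert your lemma---they prove that a non-negative polynomial over $\Q$ (resp.\ a number field) is a sum of at most $8$ (resp.\ $5$) unweighted squares---so the appeal to the references does not cover the statement you leave unproved, and that statement is exactly where all the content of the theorem lives.

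A second, related flaw is the claim that weighted sums of two squares in $\K[t]$ are closed under multiplication via the Brahmagupta--Fibonacci identity. That identity multiplies two values of the \emph{same} binary form $X^2+\alpha Y^2$; for different weight ratios $\alpha\neq\beta$ the product is only a value of the four-dimensional form $\langle 1,\alpha,\beta,\alpha\beta\rangle$. Concretely, $(t^2+1)(t^2+3)$ is not a weighted sum of two squares in $\Q[t]$ (the same mod-$p$ argument applied to $t^2+1$ and $t^2+3$ would force $c_1c_2$ to be simultaneously a square and $3$ times a square), although it is of course a weighted sum of four squares. This particular defect is harmless for the theorem as stated, since you may simply expand products and accept more squares, but as written your assembly step is incorrect; and even after fixing it, the proof still lacks the essential ingredient: an actual argument (e.g.\ Landau/Pourchet's descent over $\Q$, or the perturbation-and-rounding argument underlying $\sostwo$, which works over any subfield of $\R$ by density of $\Q$) showing that the positive-definite part $c\prod_j p_j$ is a weighted SOS with coefficients in $\K$.
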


	Let $\K$ be a commutative field and $<$ be a monomial ordering on $\K[\x]$
	and $\I\neq \{0\}$ be an ideal. We denote
	by $\LT_<(\I)$ the set of all leading terms $\LT_<(g)$ of $g\in \I$, and  by
	$\left\langle \LT_<(\I)\right\rangle $ the ideal generated by the elements
	of
	$\LT_<(\I)$.

	A subset $G=\{g_1,\dots,g_r\}$ of $\I$ is said to be a \textit{Gr\"{o}bner
		basis} of $\I$ w.r.t. some monomial order $<$ if
    \[ \left\langle
	\LT_<(g_1),\dots,\LT_<(g_r)\right\rangle=\left\langle\LT_<(\I)\right\rangle.
	\]
	Note that every ideal in $\K[\x]$ has a Gr\"{o}bner basis. A Gr\"{o}bner
	basis
	$G$ is \textit{reduced}
	if
	the two following conditions hold: the leading coefficient of $g$ is $1$,
	for
	all $g\in G$; there are no monomials of $g$ lying in $\left\langle
	\LT_<(G)\setminus\{g\} \right\rangle$. Every ideal $\I$ has a unique reduced
	Gr\"{o}bner basis. We refer the reader to \cite{cox2013} for more details.
  Further, when the monomial order $<$ is clear from the context, we omit as a
  subscript in the above notation.

	Assume that $\I$ is a zero-dimensional and radical ideal in $\Q[\x]$ and
	that
	$G$ is the reduced Gr\"{o}bner basis of $\I$ with respect to the
  lexicographical order $x_1<_{lex} \cdots <_{lex} x_n$. One says that $\I$ is
   in \textit{shape position} if $G$ has the following shape:
	\begin{equation}\label{f:sysshape}
		G=[w,x_2-v_2,\dots,x_n-v_n],
	\end{equation}
	where $w,v_2,\dots,v_n$ are polynomials in
	$\K[x_1]$ and $\deg w=\sharp V(\I)$.  

	The following lemma, named Shape Lemma, gives us a criteria for the shape
	position of an ideal.

	\begin{lemma}[Shape Lemma, \cite{gianni89}]\label{lm:shape}
		Let $\I$ be a zero-dimensional and radical ideal and $<_{lex}$ be a
    lexicographic monomial order in $\Q[\x]$. If $V(\I)$ is the union of
    $\delta$ points in $\C^n$ with distinct $x_1$-coordinates, then $\I$ is in
    shape position as in \eqref{f:sysshape}, where $v_2,\dots,v_n$ are
    polynomials in $\Q[x_1]$ of degrees at most $\delta-1$.
	\end{lemma}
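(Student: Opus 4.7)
The plan is to exploit the structure of the quotient algebra $\Q[\x]/\I$. Since $\I$ is zero-dimensional and radical, $\Q[\x]/\I$ is a finite-dimensional $\Q$-algebra whose dimension equals $\sharp V(\I)=\delta$.

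First, I would look at the elimination ideal $\I\cap \Q[x_1]$. Since $V(\I)$ consists of $\delta$ points $\alpha_1,\dots,\alpha_\delta$ with pairwise distinct $x_1$-coordinates, and $\I$ is radical, this elimination ideal is generated by the squarefree polynomial $w(x_1)=\prod_{i=1}^\delta(x_1-\alpha_{i,1})\in\Q[x_1]$, which has degree exactly $\delta$. In particular, the class of $x_1$ in $\Q[\x]/\I$ has minimal polynomial $w$, so the family $\{1,x_1,x_1^2,\dots,x_1^{\delta-1}\}$ is $\Q$-linearly independent in $\Q[\x]/\I$. Since $\dim_\Q \Q[\x]/\I=\delta$, this family is a $\Q$-basis of the quotient.

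Next, for each $j\in\{2,\dots,n\}$, the class of $x_j$ in $\Q[\x]/\I$ decomposes as a $\Q$-linear combination of this basis. This yields a polynomial $v_j\in\Q[x_1]$ of degree at most $\delta-1$ such that $x_j-v_j(x_1)\in\I$. This is the crucial rationality step: working inside the $\Q$-algebra $\Q[\x]/\I$ (rather than doing Lagrange interpolation over $\C$ through the points $\alpha_{i,1}\mapsto\alpha_{i,j}$) automatically yields $v_j\in\Q[x_1]$, without having to invoke a Galois-invariance argument.

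It then remains to show that $G=\{w(x_1),x_2-v_2(x_1),\dots,x_n-v_n(x_1)\}$ is the reduced Gr\"obner basis of $\I$ with respect to $<_{lex}$. Under the chosen order $x_1<_{lex}\cdots<_{lex}x_n$, one has $\LT(w)=x_1^\delta$ and $\LT(x_j-v_j)=x_j$. The monomials not lying in the ideal $\langle x_1^\delta, x_2,\dots,x_n\rangle$ generated by these leading terms are precisely $\{1,x_1,\dots,x_1^{\delta-1}\}$, whose cardinality equals $\delta=\dim_\Q\Q[\x]/\I$. By Macaulay's criterion (the cardinality of standard monomials coincides with the quotient dimension exactly when one has a Gr\"obner basis), $G$ is a Gr\"obner basis; alternatively, any $g\in\I$ reduces modulo $G$ to an element of $\Q[x_1]$ of degree $<\delta$ that still lies in $\I\cap\Q[x_1]=\langle w\rangle$, hence is zero. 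Reducedness is immediate since each element is monic and no non-leading monomial of any generator is divisible by the leading term of another. The main obstacle is really the rationality of the $v_j$; once the quotient-algebra viewpoint makes that transparent, the Gr\"obner basis verification is routine.
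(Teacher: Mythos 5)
The paper does not prove this lemma at all --- it is quoted from \cite{gianni89} as a known result --- so there is no in-paper argument to compare against; judged on its own, your proof is correct and is essentially the standard quotient-algebra argument. The two pillars are sound: (i) since $\Q$ is perfect and $\I$ is zero-dimensional and radical, $\dim_\Q \Q[\x]/\I=\sharp V(\I)=\delta$, and since any nonzero element of $\I\cap\Q[x_1]$ vanishes at the $\delta$ distinct $x_1$-coordinates, the classes $1,x_1,\dots,x_1^{\delta-1}$ are independent and hence a basis, which yields the rational $v_j$ of degree at most $\delta-1$ with $x_j-v_j\in\I$; (ii) the count of standard monomials of $\langle x_1^\delta,x_2,\dots,x_n\rangle$ equals $\dim_\Q\Q[\x]/\I$, or equivalently every element of $\I$ reduces to zero against $G$, so $G$ is the reduced lex Gr\"obner basis. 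The only place worth one more line is your opening assertion that the monic generator of $\I\cap\Q[x_1]$ is exactly $\prod_{i=1}^\delta(x_1-\alpha_{i,1})$: divisibility by this product follows because the generator vanishes on $V(\I)$ and the $x_1$-coordinates are distinct, while the reverse inequality $\deg\le\delta$ should be justified either by the dimension count (the $\delta+1$ powers $1,x_1,\dots,x_1^\delta$ are dependent modulo $\I$) or by the Closure/Extension Theorem for the elimination ideal; radicality alone gives squarefreeness but not that every root of the generator is a projected coordinate. Since you state the dimension count first, this is a cosmetic reordering rather than a gap. Your observation that working inside the $\Q$-algebra $\Q[\x]/\I$ gives rationality of the $v_j$ for free, without a Galois-descent or interpolation argument over $\C$, is exactly the clean way to see it and matches how the result is used later in the paper (Lemma~\ref{lm:parame} instead recovers the $v_i$ explicitly from a rational parametrization, i.e., $v_i=b\kappa_i \bmod w$, which is the effective counterpart of your existence argument).
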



Let $V$ be a zero-dimensional algebraic subset of $\C^n$, $\delta:=\sharp
V$.
A \textit{zero-dimensional rational parametrization} $\mathcal{Q} =
((w,\kappa_1,\dots,\kappa_n),\lambda)$ of $V$ consists in $n+1$
univariate polynomials $w,\kappa_1,\dots,\kappa_n$ in $\Q[t]$,  where
$w'$ is the derivative of $w$, such that $w$ is monic and
squarefree,  $\deg \kappa_i < \deg w$, for $i=1,\dots,n$, and a
$\Q$-linear form $\lambda$ in $n$
variables satisfying $\lambda(\kappa_1,\dots,\kappa_n) = tw' \mod w$, such
that
$$V=\left\lbrace \Big(\frac{\kappa_1(t)}{w'(t)}, \, \dots, \,
\frac{\kappa_n(t)}{w'(t)}\Big) : w(t)=0\right\rbrace.$$
The condition on the linear form  $\lambda$ states that the roots of $w$
are precisely the values taken by $\lambda$ on $V$, and that $\lambda$
separates $V$, i.e.,  $\lambda(\x)\neq \lambda(\y)$ for any distinct pair
$\x,\y$ in $V$.

Let $f$ be in $\Q[\x]$ of degree $d$ and bitzise $\tau$. Assume that
$V_{\grad}(f)$ is finite. By applying \cite[Corollary 2]{safey2018} to the
system of partial derivatives, we obtain the following corollary (Corollary
\ref{cor:bit1}) which states that there exists an algorithm computing a
zero-dimensional rational parametrization of $V_{\grad}(f)$ and provides bit
complexity estimates for when applying the algorithm in \cite{safey2018} to
gradient ideals. The proof of Corollary \ref{cor:bit1} is straightforward
from \cite[Corollary 2]{safey2018} and is then postponed to
Appendix \ref{app:bitparam}.


\begin{corollary}\label{cor:bit1} Assume that
    $V_{\grad}(f)$ is finite. There exists an algorithm that
    takes $f$
    as in input, and
    that produces one of the following outputs:
    \begin{itemize}
        \item[\rm a)] either a zero-dimensional rational parametrization of
        $V_{\grad}(f)$;
        \item[\rm b)] or a zero-dimensional rational parametrization of
        degree less
        than that of $V_{\grad}(f)$;
        \item[\rm c)] or fails.
    \end{itemize}
    In any case, the algorithm uses
    \begin{equation}\label{f:Osim2}
        \widetilde O\Big(n^2(d+\tau)d^{2n+1}\binom{n+d}{d}\Big),
    \end{equation}
    boolean operations. Moreover, the polynomials
    $w,\kappa_1,\dots,\kappa_n$
    involved in the zero-dimensional rational parametrization output have
    degree at most $(d-1)^n$ and bitsize $ \footnotesize{\widetilde
        O}\left((d+\tau+n)(d-1)^{n}\right)$.
\end{corollary}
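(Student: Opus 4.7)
The plan is to invoke the algorithm underlying \cite[Corollary 2]{safey2018} directly on the input system formed by the $n$ partial derivatives $\frac{\partial f}{\partial x_1},\dots,\frac{\partial f}{\partial x_n}$, since by definition $V(\I_{\grad}(f)) = V_{\grad}(f)$. The three possible outputs of the generic algorithm in \cite{safey2018} (a correct rational parametrization, a rational parametrization of smaller degree, or failure) translate verbatim into the three items a), b), c) of the statement, so no further postprocessing is required.

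The substantive part is the parameter bookkeeping. First I would observe that each partial derivative $\frac{\partial f}{\partial x_i}$ has degree at most $d-1$ and, since differentiation multiplies every coefficient of $f$ by an integer bounded by $d$, bitsize at most $\tau+\lceil\log_2 d\rceil$, which is $\widetilde O(\tau+\log d)$. Thus we feed \cite[Corollary 2]{safey2018} a system of $n$ polynomials in $n$ variables of degree at most $d-1$ and bitsize $\widetilde O(\tau)$; the assumption that $V_{\grad}(f)$ is finite matches the zero-dimensionality hypothesis required there. The B\'ezout bound on $\sharp V_{\grad}(f)$ is $(d-1)^n$, which yields the claimed degree bound on $w,\kappa_1,\dots,\kappa_n$.

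Finally, I would substitute these parameters into the complexity and bitsize estimates of \cite[Corollary 2]{safey2018}. Replacing $d$ by $d-1$ and $\tau$ by $\tau + \log d$ leaves the soft-Oh bound unchanged, and gives the announced estimate $\widetilde O\bigl(n^2(d+\tau)d^{2n+1}\binom{n+d}{d}\bigr)$ for the number of boolean operations, while the bitsize estimate specialized to degree $d-1$ and bitsize $\widetilde O(\tau)$ yields $\widetilde O\bigl((d+\tau+n)(d-1)^n\bigr)$. There is no real mathematical obstacle here; the closest thing to a subtlety is to confirm that ``smaller than the degree of $V$ for the input system'' in \cite[Corollary 2]{safey2018} coincides with ``smaller than that of $V_{\grad}(f)$'' in our statement, which is immediate by construction.
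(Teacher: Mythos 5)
Your overall route is the same as the paper's: apply \cite[Corollary 2]{safey2018} directly to the system of partial derivatives, note that each $\frac{\partial f}{\partial x_i}$ has degree at most $d-1$ and bitsize at most $\tau+\log_2 d$, and read off the three possible outputs. However, there is a genuine gap in the complexity bookkeeping. The estimate in \cite[Corollary 2]{safey2018} is not parametrized only by the degree and bitsize of the input equations: it depends on the length $L$ of a straight-line program evaluating the input system, and this is exactly where the factor $\binom{n+d}{d}$ in \eqref{f:Osim2} comes from. You never say how the gradient system is represented, and your claim that ``replacing $d$ by $d-1$ and $\tau$ by $\tau+\log d$ \dots gives the announced estimate'' cannot produce a $\binom{n+d}{d}$ term out of degree and bitsize data alone, so the key step is asserted rather than proved.

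The paper fills this in as follows: $f$ has at most $\binom{n+d}{d}$ monomials, each computable with at most $d+1$ multiplications, so $f$ admits a straight-line program of size $O(d\binom{n+d}{d})$; then the Baur--Strassen theorem \cite[Theorem 1]{baur1983} yields a straight-line program of size $L=O(d\binom{n+d}{d})$ evaluating \emph{all} $n$ partial derivatives simultaneously. Plugging this $L$, together with the degree bound $d-1$ and bitsize bound $\tau+\log d$, into \cite[Corollary 2]{safey2018} gives $\widetilde O\bigl(n^2d^{2n}(\log d+\tau+(d-1))(d\binom{n+d}{d}+n(d-1)+n^2)\bigr)$, which simplifies to \eqref{f:Osim2}. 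Note that the naive alternative of building a separate straight-line program for each derivative would cost an extra factor of $n$ (giving $n^3$ instead of $n^2$ in the bound, and $\widetilde O$ does not absorb polynomial factors in $n$), so the Baur--Strassen step is not merely cosmetic. The degree bound $(d-1)^n$ and the bitsize bound $\widetilde O((d+\tau+n)(d-1)^n)$ for $w,\kappa_1,\dots,\kappa_n$ then follow from the same corollary as you indicated; that part of your argument is fine.
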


Assume that $\mathcal{Q} = ((w,\kappa_1,\dots,\kappa_n),x_1)$ is a
zero-dimensional rational parametrization of $V_{\grad}(f)$ given by
the
algorithm from Corollary \ref{cor:bit1}.  The following lemma
(Lemma \ref{lm:parame}) and its proof point out the explicit shape position of
 $\I_{\grad}(f)$. Moreover, the degree and the bit complexity of the involved
  polynomials are estimated.

\begin{lemma}\label{lm:parame}
There exist
       polynomials $w,v_2,\dots,v_n$ in $\Q[x_1]$ satisfying $\deg v_i<\deg w$,
       for $i=2,\dots,n$, such that
       $\I_{\grad}(f)=\left\langle
       w,x_2-v_2,
       \dots,x_n-v_n\right\rangle$.
{Furthermore, to compute
$w,v_2,\dots,v_n$, we
use
\begin{equation}\label{f:compv}
    \widetilde O\left( (\tau+n+d)^2d^{6n}\right)
\end{equation} boolean operations. Their degrees are at
most
$(d-1)^n$ and their maximal bitsizes are bounded from above by
$\footnotesize{\widetilde O}((\tau+n+d)d^{3n})$.}
\end{lemma}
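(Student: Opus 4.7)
The plan is to combine the zero-dimensional rational parametrization produced by Corollary~\ref{cor:bit1} with the Shape Lemma (Lemma~\ref{lm:shape}), and to read off $v_2,\ldots,v_n$ directly from the parametric coordinates of the points of $V_{\grad}(f)$. Since the separating form used in the parametrization is $\lambda=x_1$, the $x_1$-coordinates of the $\delta$ points of $V_{\grad}(f)$ are pairwise distinct. Combined with the standing assumption that $\I_{\grad}(f)$ is zero-dimensional and radical, the Shape Lemma then yields polynomials $\widetilde w,v_2,\ldots,v_n\in\Q[x_1]$ with $\deg v_i<\deg\widetilde w=\delta\leq(d-1)^n$ such that $\I_{\grad}(f)=\langle\widetilde w,x_2-v_2,\ldots,x_n-v_n\rangle$. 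A short argument identifies $\widetilde w$ with the polynomial $w$ coming from the parametrization: both are (up to a nonzero scalar) the minimal polynomial of $x_1$ on $V_{\grad}(f)$, because $w$ is monic squarefree and its roots are exactly the values of $\lambda=x_1$ on the points of $V_{\grad}(f)$.

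To obtain an explicit description of each $v_i$, I would use that the condition $\lambda(\kappa_1,\ldots,\kappa_n)=tw'\bmod w$ with $\lambda=x_1$ forces $\kappa_1\equiv tw'\pmod w$, so on $V_{\grad}(f)$ we have $x_1=t$ and $x_i=\kappa_i(t)/w'(t)$ for $i\geq 2$. Consequently $v_i$ is uniquely determined by
\[
w'(x_1)\,v_i(x_1)\equiv \kappa_i(x_1)\pmod{w(x_1)},\qquad \deg v_i<\deg w.
\]
Since $w$ is squarefree, $\gcd(w,w')=1$, so $w'$ is invertible modulo $w$. Algorithmically, I would compute $w'$ by formal differentiation, apply the extended Euclidean algorithm to $(w,w')$ to obtain $(w')^{-1}\bmod w$, and then set $v_i:=\kappa_i\cdot(w')^{-1}\bmod w$.

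For the quantitative estimates, I would feed in the bounds from Corollary~\ref{cor:bit1}: $w$ and the $\kappa_i$'s have degrees $\leq(d-1)^n$ and bitsizes $\widetilde O((\tau+n+d)(d-1)^n)$. Standard height bounds on the solution of the linear system $w'v_i\equiv\kappa_i\pmod w$ (equivalently, on Bezout cofactors scaled by $\kappa_i$) then give $\hht(v_i)=\widetilde O((\tau+n+d)d^{3n})$, with $\deg v_i<\deg w\leq(d-1)^n$ as required. For the number of boolean operations, I would add the cost of the parametrization from Corollary~\ref{cor:bit1} to the cost of the $n-1$ extended-GCD/multiplication/reduction computations on univariate polynomials of degree $\widetilde O(d^n)$ and bitsize $\widetilde O((\tau+n+d)d^n)$, using fast algorithms for polynomial arithmetic over $\Q$, and check that the sum is absorbed in $\widetilde O((\tau+n+d)^2d^{6n})$.

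The two delicate points I expect to spend most care on are: (i) cleanly justifying the identification of the univariate polynomial of the Shape Lemma with the polynomial $w$ of the parametrization (to ensure we really recover a reduced Gröbner basis of $\I_{\grad}(f)$ and not merely a triangular set describing the same variety), and (ii) propagating the bit-size bounds through the extended Euclidean step, since the coefficients of $(w')^{-1}\bmod w$ can a priori blow up; controlling this via the worst-case height of Bezout cofactors for polynomials of height $\widetilde O((\tau+n+d)d^n)$ is the main technical obstacle of the proof.
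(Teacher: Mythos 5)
Your proposal follows essentially the same route as the paper: since $\lambda=x_1$ separates $V_{\grad}(f)$, the shape-position basis exists, and $v_i$ is obtained as $\kappa_i\cdot(w')^{-1}\bmod w$, where $(w')^{-1}\bmod w$ is exactly the B\'ezout cofactor $b$ (with $aw+bw'=1$) computed by the extended Euclidean algorithm, with the bounds propagated from Corollary~\ref{cor:bit1}, the height bounds on B\'ezout cofactors, and the Euclidean-division cost of Lemma~\ref{lm:div1}. Your extra care in identifying the Shape Lemma polynomial with the monic squarefree $w$ of the parametrization is a sound (and slightly more explicit) justification of the ideal equality that the paper treats tersely.
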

\begin{proof} {Here we give only the proof of the degree estimate. The
         proof of
     the bit complexity is routine but rather technical and then postponed to
      Appendix \ref{app:q}.}

Because $w$ is squarefree and $w'$ is the derivative of $w$, one sees
	that the gcd of $w$ and $w'$ is
	$1$. From the extended Euclidean algorithm \cite[Algorithm
     3.14]{gathen2013},
     there exist two B\'{e}zout coefficients of $w,w'$,
     namely $a,b$ in $\Q[x_1]$, such that $aw+bw'=1$.
     For $i=2,\dots,n$, we see that $w'x_i(t)=\kappa_i(t)$ for any $t$
	satisfying $w(t)=0$. As $\deg \kappa_i \leq \deg w$ and the linear form
	$\lambda=x_1$ separates
	$V$, we
	have $w'x_i=\kappa_i$. This yields  $bw'x_i=b\kappa_i$. Since $bw'=1-aw$,
	we observe that $x_i-awx_i =b\kappa_i$
	and, hence, $x_i=b\kappa_i \mod w$. By denoting
	$v_i:=b\kappa_i \mod w$, we obtain  $w,v_2,\dots,v_n$ which
	are the desired polynomials.
\end{proof}


The two following lemmas establish the bit
	complexity of
	Euclidean division algorithm and the extended Euclidean algorithm for
	univariate
	polynomials over $\Z$ which will be used later on (in Proposition
	\ref{prop:bitshape}) to investigate the bit complexity of our algorithms.

	\begin{lemma}\label{lm:div1}
Let $a,b$ be polynomials in
		$\Z[t]$, with $\deg b=m \leq \deg a=d$, and $\tau$ an upper bound of
		$\hht(a)$ and $\hht(b)$. To compute the
		quotient
		$q$ and the remainder $r$ of the division of $a$ by $b$, we use the
		Euclidean division algorithm
		\cite[Algorithm 2.5]{gathen2013}.
		Then, this algorithm uses
			$\footnotesize{\widetilde
			O}\left(m\tau (d-m)^2\right)$ boolean operations. Furthermore, both
		bitsizes of $q$ and $r$ are bounded from above by
		$\footnotesize{\widetilde
			O}\left(\hht(d-m)\right)$.
	\end{lemma}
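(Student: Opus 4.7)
The plan is to follow the standard complexity analysis for the Euclidean division algorithm in \cite[Algorithm 2.5]{gathen2013}, tracking both the number of arithmetic operations and the growth of coefficient bitsizes. I would begin by recalling the structure of the algorithm: starting from $r_0 = a$, at iteration $i$ one computes the quotient term $\mu_i := \frac{\lc(r_i)}{\lc(b)}\, t^{\deg r_i - m}$, updates the remainder as $r_{i+1} := r_i - \mu_i\, b$, and appends $\mu_i$ to the running quotient. Since $\deg r_{i+1} < \deg r_i$, the algorithm terminates after at most $d - m + 1$ iterations, with $q = \sum_i \mu_i$ and $r = r_{d-m+1}$.

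Next, I would bound the bitsizes of the intermediate polynomials. Although $a$ and $b$ have integer coefficients of height at most $\tau$, the terms $\mu_i$ are in general rational since they carry $\lc(b)$ in the denominator. Tracking the height after each subtraction $r_{i+1} = r_i - \mu_i b$, the numerators and denominators grow in a controlled additive fashion with respect to $\tau$ and the iteration counter, and the bounds on $\hht(q)$ and $\hht(r)$ claimed in the lemma follow from this recursion.

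Finally, I would assemble the complexity count: at iteration $i$, the subtraction $r_i - \mu_i b$ touches $O(m)$ coefficients, and each coefficient operation is a multiplication or addition of rationals whose heights are controlled by the growth analysis above. Using fast integer arithmetic (FFT-based multiplication), each such operation costs soft-linear time in the coefficient bitsize. Summing over $d - m + 1$ iterations and multiplying by the per-coefficient arithmetic cost yields the bound $\widetilde{O}(m\tau(d-m)^2)$ boolean operations.

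The main obstacle will be the bookkeeping of coefficient growth: one has to be careful that the denominators introduced by successive divisions by $\lc(b)$ do not blow up more than linearly in the iteration index, so that the per-operation bit cost remains bounded by $\widetilde{O}(\tau(d-m))$. Once this is in hand, the two claims of the lemma follow simultaneously from the iteration-by-iteration analysis.
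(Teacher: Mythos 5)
Your plan is essentially the paper's own proof (Appendix C): the same iteration-by-iteration analysis of \cite[Algorithm 2.5]{gathen2013}, establishing the additive height growth $\hht(r_{i+1})\leq \hht(r_i)+2\tau$ (so intermediate heights are $O((d-m)\tau)$, which also bounds $\hht(q)$ since each quotient term is a single monomial of new degree), and then summing the per-iteration cost $\widetilde{O}\bigl(m(\tau+\hht(r_i))\bigr)$ over the at most $d-m$ iterations to get $\widetilde{O}\left(m\tau(d-m)^2\right)$. The ``main obstacle'' you flag is exactly the recursion the paper writes down, so the sketch is correct and complete in outline.
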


	Again, the proof of Lemma \ref{lm:div1} is routine but rather technical. We
  postpone it to Appendix \ref{app:lm:div1}.

{Denote by $\Q(x_1)$ the field of rational fractions in variable $x_1$ with
  coefficients in $\Q$. With the lexicographic monomial order
    $x_2<\dots<x_n$, we consider the standard (multivariate) division \cite[Ch.
    2, Sec 3.]{cox2013} of $g \in \Q[x_1][x_2,\dots,x_n]$ by the list
    $[x_2-\frac{a_2}{a_0},\dots,x_n-\frac{a_n}{a_0}]$, with
    $a_0,a_2,\dots,a_n \in \Q[x_1]$.
    To compute the quotients
    $\phi_2,\dots,\phi_n\in \Q(x_1)[x_2,\dots,x_n]$ and remainder $r\in \Q(x_1)$ such that
    $g=\sum_{i=2}^n\phi_{i}(x_i-\frac{a_i}{a_0})+r$, we iterate
    classical univariate divisions by $x_i-\frac{a_i}{a_0}$ for $2\leq i \leq n$
    considering them as univariate in $x_i$ so that we eliminate step by step
    the variables $x_2, \ldots, x_n$ in $g$.
    The details of this algorithm,
    which we name \textsf{Eliminate}, are given in Appendix~\ref{app:divshape}
    (Algorithm \ref{alg:divshap}).
The inputs of $\mathtt{Eliminate}$ are $g,a_0,a_2,\dots,a_n$ and the output is the list $[\phi_2,\dots,\phi_n]$ and the remainder $r$.
\if Note that when we perform $\mathtt{Eliminate}(g,a_0,a_2,\dots,a_n)$ and $\mathtt{Eliminate}(g+h,a_0,a_2,\dots,a_n)$, with $h\in \Q[x_1]$, we obtain the same quotients in the outputs. \fi}

The bit complexity of \textsf{Eliminate} is given in the following
lemma whose proof (which is quite routine) is given in Appendix
\ref{app:divshape}.

\begin{lemma}\label{lm:divshape}
Assume that $g\in \Q[x_1][x_2,\dots,x_n]$ has degree $d$ in $x_2,\dots,x_n$ and bitsize $\tau_g$, and that the
		polynomials $a_0,a_2,\dots,a_n \in \Q[x_1]$ have
		bitsizes at most $\tau_{a}$.
		Then, Algorithm $\mathtt{Eliminate}$ runs in
		$$\footnotesize{\widetilde
			O}\left(n\tau_{g}+n^2d\tau_a\right)$$
		boolean operations and the bitsizes of the outputs $\phi_2,\dots,\phi_n$
are in $\footnotesize{\widetilde O}\left(\tau_g+nd\tau_a\right)$.
\end{lemma}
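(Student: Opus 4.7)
The plan is to unroll Algorithm \textsf{Eliminate} step by step and show that each elimination step behaves like a single univariate synthetic division by a linear polynomial, so that the overall cost and bitsize can be bounded by summing the per-step cost of $n-1$ such divisions. Concretely, since the monomial order is $x_2<\dots<x_n$, the algorithm first views the input $g$ as univariate in $x_n$ over the coefficient ring $\Q[x_1][x_2,\dots,x_{n-1}]$, divides it by $x_n-a_n/a_0$ to get the quotient $\phi_n$ and a remainder $r_n$ that no longer contains $x_n$, then repeats with $x_{n-1},\ldots,x_2$. At each step the divisor is linear in the eliminated variable, so the division is performed by a Horner-style (synthetic) scheme.

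I would then track bitsize in the natural representation where every coefficient that appears is written as a polynomial in $\Q[x_1]$ divided by a power of $a_0$. At the $i$-th elimination step, because the current polynomial has degree at most $d$ in $x_i$, Horner's rule performs $O(d)$ multiplications of coefficients by $a_i/a_0$; after clearing by a fresh factor $a_0^d$, each such step increases the numerator's bitsize by at most $\widetilde{O}(d\tau_a)$, while the degree in $x_1$ grows by at most $O(d\deg a_0)=\widetilde{O}(d)$ in the relevant sense for the $\widetilde{O}$ bound. Starting from bitsize $\tau_g$ and iterating through the $n-1$ eliminations, the bitsize of every intermediate coefficient, and in particular of every quotient $\phi_i$ produced along the way, is bounded by $\widetilde{O}(\tau_g+nd\tau_a)$, which is exactly the claimed bound on the $\phi_i$'s.

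For the running time, at elimination step $i$ the Horner scheme performs $O(d)$ additions and multiplications in $\Q[x_1]$ per coefficient of $x_i^k$; using fast univariate polynomial arithmetic \cite{gathen2013} on operands of bitsize $\widetilde{O}(\tau_g+nd\tau_a)$, each such arithmetic operation takes $\widetilde{O}(\tau_g+nd\tau_a)$ boolean operations. Amortizing the work across the whole pass at step $i$ yields a cost of $\widetilde{O}(\tau_g+d\tau_a)$ for that pass, up to polylogarithmic factors, and summing over the $n-1$ passes produces the announced bound $\widetilde{O}(n\tau_g+n^2 d\tau_a)$.

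The main obstacle is the bookkeeping of denominators: a naive accounting would chain the denominators multiplicatively and blow up the bitsize. The crucial point is that at every step the denominator is a single power of $a_0$, and after the substitution $x_i\leftarrow a_i/a_0$ the new denominator multiplies by $a_0^d$ only; this must be checked so that the bitsize increments stay additive in the number of eliminations rather than multiplicative. Once this is done, the cost and bitsize bounds follow from standard fast univariate arithmetic applied termwise, which is why the detailed verification is routine and can be relegated to Appendix~\ref{app:divshape}.
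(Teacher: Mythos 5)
Your proof follows essentially the same route as the paper's: unroll the $n-1$ successive divisions by the linear polynomials $x_i-\frac{a_i}{a_0}$, show that each pass increases the bitsize of the intermediate data by only $\widetilde{O}(d\tau_a)$ (so each $\phi_i$ has bitsize $\widetilde{O}(\tau_g+nd\tau_a)$), and charge each pass a cost proportional to that bitsize, summed over the $n-1$ passes. The only blemish is an arithmetic slip in your runtime paragraph: the per-pass cost should be $\widetilde{O}(\tau_g+nd\tau_a)$ rather than $\widetilde{O}(\tau_g+d\tau_a)$, which is exactly what makes the total come out to $\widetilde{O}(n\tau_g+n^2d\tau_a)$ as in the paper's own accounting.
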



\section{SOS of polynomials modulo gradient ideals}\label{sec:pol}

	\subsection{The existence of an SOS decomposition over the rationals}
The main result of this section is stated below.

	\begin{theorem}\label{thm:main}
		Let $f\in \Q[\x]$ such that the following conditions hold:
		\begin{itemize}
			\item[\rm a)] The infimum $f^{\star}=\inf \{f(x):x\in \R^n\}$ is
			attained.
			\item[\rm b)] The gradient ideal $\I_{\grad}(f)$ is
			zero-dimensional and radical.
		\end{itemize}
		Then, $f$ is non-negative over $\R^{n}$ if and only if $f$ is an SOS of
    polynomials over the quotient ring $\Q[\x]/\I_{\grad}(f)$.
	\end{theorem}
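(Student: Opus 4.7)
The $(\Leftarrow)$ direction is immediate from assumption (a): if $f=\sum_j c_j q_j^2+\sum_i \phi_i\frac{\partial f}{\partial x_i}$ with $c_j\in\Q_{>0}$, then $f\geq 0$ on $V^{\R}_{\grad}(f)$, and because the infimum of $f$ on $\R^n$ is attained and must be attained at a critical point (where $\nabla f$ vanishes), this infimum lies in $V^{\R}_{\grad}(f)$ and is therefore $\geq 0$.

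For the $(\Rightarrow)$ direction, the plan is to reduce to a univariate problem via the shape position provided by Lemma~\ref{lm:parame}, and then apply the Chinese Remainder Theorem to split the univariate quotient into number fields. Concretely, after possibly applying a rational linear change of variables making $x_1$ a separating linear form on $V_{\grad}(f)$ (which is finite by assumption~(b)), Lemma~\ref{lm:parame} furnishes $w,v_2,\ldots,v_n\in\Q[x_1]$ with $\I_{\grad}(f)=\langle w,\,x_2-v_2,\ldots,x_n-v_n\rangle$ and $w$ squarefree (since $\I_{\grad}(f)$ is radical). Substituting $x_i=v_i(x_1)$ in $f$ and reducing modulo $w$ yields $h(x_1):=f(x_1,v_2(x_1),\ldots,v_n(x_1))\bmod w$ with $f\equiv h\pmod{\I_{\grad}(f)}$; it is then enough to prove that $h$ is a weighted SOS in $\Q[x_1]$ modulo $\langle w\rangle$.

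Factor $w=p_1\cdots p_\ell$ into distinct $\Q$-irreducible factors; by CRT one has $\Q[x_1]/\langle w\rangle\cong\prod_{i=1}^\ell K_i$, where $K_i:=\Q[x_1]/\langle p_i\rangle=\Q(\alpha_i)$ is a number field and the image of $h$ is $(h(\alpha_i))_i$. I claim each $h(\alpha_i)$ is a sum of squares in $K_i$. If $p_i\mid h$ the claim is trivial; otherwise, for every real embedding $\sigma:K_i\hookrightarrow\R$, $\alpha:=\sigma(\alpha_i)$ is a real root of $p_i$, hence of $w$, so that $\bigl(\alpha,v_2(\alpha),\ldots,v_n(\alpha)\bigr)\in V^{\R}_{\grad}(f)$; therefore $\sigma(h(\alpha_i))=f(\alpha,v_2(\alpha),\ldots,v_n(\alpha))\geq 0$ by the hypothesis $f\geq 0$ on $\R^n$, and this value is nonzero because otherwise $p_i$ and $h$ would share the root $\alpha$ forcing $p_i\mid h$ (by irreducibility of $p_i$). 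Hence $h(\alpha_i)$ is totally positive in $K_i$, and the classical Hilbert--Landau theorem on sums of squares in number fields yields $h(\alpha_i)=\sum_{j}\xi_{ij}^{2}$ in $K_i$. Padding with zeros so the index $j$ ranges over a common set and applying CRT in the reverse direction, I pick $s_j\in\Q[x_1]$ of degree less than $\deg w$ satisfying $s_j\equiv\xi_{ij}\pmod{p_i}$ for every $i$; then $\sum_j s_j^{2}\equiv h\pmod{w}$ in $\Q[x_1]$, which lifts to the desired SOS decomposition of $f$ over $\Q[\x]/\I_{\grad}(f)$ with rational coefficients.

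The main non-trivial input is the Hilbert--Landau theorem ensuring that totally positive elements of a number field are sums of squares in that field; this plays the role, in the multivariate setting with gradient ideals, that Theorem~\ref{w_SOS} plays in the univariate setting. An equivalent route avoiding Hilbert--Landau would be to construct directly a rational lift $\tilde h\in\Q[x_1]$ with $\tilde h\equiv h\pmod w$ and $\tilde h\geq 0$ on $\R$, and then invoke Theorem~\ref{w_SOS} for $\tilde h$; however, producing such a $\tilde h$ requires simultaneously handling the real and complex embeddings of every $K_i$ and so carries exactly the same arithmetic content.
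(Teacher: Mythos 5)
Your proof is correct, but the forward implication follows a genuinely different route from the paper's. The paper also reduces to the shape position $[w,x_2-v_2,\dots,x_n-v_n]$ and to the univariate polynomial obtained by substituting $x_i=v_i(x_1)$ into $f$; however, its key observation is that this substituted polynomial $h(x_1)=f(x_1,v_2,\dots,v_n)$ (taken \emph{without} reduction modulo $w$) is non-negative on all of $\R$, because the rational curve $\{x_2=v_2(x_1),\dots,x_n=v_n(x_1)\}$ lies inside $\R^n$ where $f\geq 0$; then Theorem~\ref{w_SOS} (Landau--Pourchet over $\Q$) gives a weighted SOS of $h$ in $\Q[x_1]$ directly, and $f=h+\sum_{i\geq2}\phi_i(x_i-v_i)$ finishes the argument. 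You instead reduce $h$ modulo $w$, split $\Q[x_1]/\langle w\rangle$ by CRT into number fields $K_i=\Q[x_1]/\langle p_i\rangle$, show each residue is totally positive (or zero) using only the non-negativity of $f$ at the \emph{real points of} $V^{\R}_{\grad}(f)$, invoke the theorem that totally positive elements of a number field are sums of squares in that field, and lift back by CRT; note that this theorem is due to Siegel (or follows from Artin--Schreier theory), not ``Hilbert--Landau'' -- Landau's 1906 result is the univariate polynomial statement the paper uses. What your route buys is a formally stronger statement: non-negativity of $f$ on $V^{\R}_{\grad}(f)$ alone already yields the rational certificate, i.e.\ the exact rational analogue of Parrilo's theorem, with no use of global non-negativity. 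What it costs is a deeper arithmetic ingredient and a less directly effective argument: the paper's proof immediately feeds $h$ into $\sosone$/$\sostwo$ and underpins the bit-complexity analysis, whereas making your number-field step algorithmic would require additional machinery. Finally, your closing remark that producing a non-negative rational lift of $h$ modulo $w$ ``carries exactly the same arithmetic content'' is too pessimistic under the theorem's actual hypothesis: since $f\geq 0$ on all of $\R^n$, the unreduced substitution $f(x_1,v_2,\dots,v_n)$ is itself such a lift for free -- this is precisely the paper's shortcut.
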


	\begin{proof} Suppose that $f$ is non-negative over $\R^{n}$ and $\sharp
    V_{\grad}(f)=\delta$. We prove that $f$ is an SOS of
    polynomials over the quotient ring $\Q[\x]/\I_{\grad}(f)$. We
    consider the two following cases:

{\sc Case 1.} Distinct points in $V_{\grad}(f)$ have distinct
	$x_1$-coordinates.
Consider the lexicographic monomial order $x_1 <x_2<\dots<x_n$ on
		$\Q[\x]$.
		Since the gradient ideal is zero-dimensional and radical, according to
		Shape Lemma
		(Lemma~\ref{lm:shape}), the reduced Gr\"{o}bner basis of
		$\I_{\grad}(f)$ has
		the following shape:
		\begin{equation}\label{f:shape}
			[w,x_2-v_2,\dots,x_n-v_n ],
		\end{equation}
		where $v_2,\dots,v_n$ are polynomials in $\Q[x_1]$ of degree at most
		${\delta}-1$. We denote
		\begin{equation}\label{f:h}
			h(x_1):=f(x_1,v_2,\dots,v_n),
		\end{equation}
		where $x_i$ is replaced by $v_i$ in $f$ for $i=2,\dots,n$.
With the order $<$, we divide $f-h$ by the system in \eqref{f:shape} by using
the division algorithm in \cite[Ch. 2, Sec 3.]{cox2013}.
Then, there
    exist $\phi_1,\dots,\phi_n$ in $ \Q[\x]$, and $r$ in $\Q[x_1]$ such that
		\begin{equation}\label{e2}
			f-h=\phi_1w+
			\sum_{i=2}^{n}\phi_i(x_i-v_i)+r,
		\end{equation}
		with $\deg r<{\delta}$. Let $x$ be in $V_{\grad}(f)$. From
		\eqref{f:h}
			and \eqref{e2}, one sees that $f(x)=h(x)$. Hence, $f-h$
			vanishes on $V_{\grad}(f)$.
Clearly, the value of $\phi_1w+
			\sum_{i=2}^{n}\phi_i(x_i-v_i)$ is zero on
			$V_{\grad}(f)$.
			This implies that $r$ also
			vanishes on the image set
    $\pi(V_{\grad}(f))$, where $\pi(x_1,\dots,x_n)=x_1$. Since distinct points
    in $V_{\grad}(f)$ have distinct $x_1$-coordinates, it
      holds that $\sharp\pi(V_{\grad}(f))=\sharp V_{\grad}(f) ={\delta}$. As
    $\deg r<{\delta}$, we conclude that $r\equiv 0$. Hence, from $\eqref{e2}$,
    we obtain the following representation:
		\begin{equation}\label{e3}
			f=h+\phi_1w+
			\sum_{i=2}^{n}\phi_i(x_i-v_i).
		\end{equation}

The set $\{(x_1,x_2,\dots,x_n)\in\R^n:x_2 = v_2, \ldots, x_n=v_n\}$ defines
    a curve which is parametrized by $x_1$. Recall that
    $f$ is non-negative over $\R^n$. Hence $f$ is non-negative over this curve.
    Since $f$ takes the same values over this curve as $h$ takes over $x_1$ when
    $x_1$ ranges in $\R$, one can conclude that the univariate polynomial $h$ is
    also non-negative over $\R$. According to the results on SOS decompositions
    of univariate polynomials with rational coefficients
    in Theorem \ref{w_SOS}
$h$ is a sum of $s$ squares in $\Q[x_1]$, i.e., there exist $q_1,\dots,q_s
\in \Q[x_1]$ and $c_1,\dots,c_s$ in $\Q_+$
    such that $h=c_1q^2_1+\dots+c_sq_s^2$. Therefore, from $\eqref{e3}$, we assert
    that $f$ is an SOS of polynomials over $\Q[\x]/\I_{\grad}(f)$.

{\sc Case 2.} There are two distinct points in $V_{\grad}(f)$ such that
their $x_1$'s-coordinates are equal. 
According to \cite[Lemma 2.1]{rouillier1999},
there is
	$j\in\{1,\dots,(n-1)\delta(\delta-1)/2\}$ such that the linear function
	$u:=x_1+jx_2+\dots+j^{n-1}x_n$ separates $V_{\grad}(f)$, i.e., $u(x)\neq
	u(y)$ for any distinct points $x,y$ in $V_{\grad}(f)$. We consider the
	change of variables $\y=T\x$, where
	\begin{equation}\label{T}
		T=\begin{bmatrix}
			1 & j & j^2 &\cdots & j^{n-1} \\
			0 & 1 & 0 & \cdots  & 0 \\
			0& 0 & 1 & \cdots & 0\\
			\vdots & \vdots & \vdots & \ddots & \vdots\\
			0& 0 & 0 & \cdots & 1
		\end{bmatrix}.
	\end{equation}
We see that $T$ is an invertible matrix. Then we obtain a polynomial
$g(\y)=f(T^{-1}\y)$ in variables $y_1, y_2, \dots, y_n$ having the following
property: the infimum $g^{\star}=\inf \{g(y):y\in \R^n\}$
is attained. Because of the chain rule $\nabla g= \nabla f \circ
T^{-1}$,
we have $$V_{\grad}(g)=\{y\in\C^n:y=Tx, x\in V_{\grad}(f)\};$$
Thus, the gradient ideal $\I_{\grad}(g)$ is zero-dimensional
		and radical. Moreover,
		since
		$y_1=u(\x)$ separates
		$V_{\grad}(f)$,
		distinct points in $V_{\grad}(g)$ have distinct
		$y_1$-coordinates. 
We observe that $g\in \Q[\y]$ is non-negative and satisfies the conditions of
the theorem; Case 1 happens to $V_{\grad}(g)$ as well. Hence, there exists
an SOS decomposition of $g$ modulo $\I_{\grad}(g)$
\begin{equation}\label{f1:gSOS}
	g(\y)=\sum_{j=1}^sc_j\bar q_{j}^{2}(\y)+
	\sum_{i=1}^{n}\bar\phi_i(\y)\frac{\partial
	g}{\partial y_i},
\end{equation}
where $\bar\phi=(\bar\phi_1,\dots,\bar\phi_n)$ and $\bar\phi_i\in \Q[\y]$.
In \eqref{f1:gSOS}, we replace $\y$ by $T\x$ and $\frac{\partial
	g}{\partial y_i}$ by $\frac{\partial
	f}{\partial x_i} \circ
T^{-1}$, we obtain an decomposition of $f$ as
follows:
\begin{equation}\label{f1:fSOS}
	f(\x)=g(T\x)=\sum_{j=1}^sc_j\bar q_{j}^{2}(T\x)+
	\sum_{i=1}^{n}\bar\phi_i(T\x)\frac{\partial
		f}{\partial x_i}\circ T^{-1}.
\end{equation}
Because of $(\frac{\partial
	f}{\partial x_i}\circ T^{-1})(Tx)=\frac{\partial
	f}{\partial x_i}(x)$, \eqref{f1:fSOS} is an SOS decomposition of $f$ modulo
 $\I_{\grad}(f)$ of $f$.


We now prove the reverse conclusion. Suppose that $f$ is SOS over the
		quotient
		ring $\Q[\x]/\I_{\grad}(f)$, i.e., $f$ can be decomposed as follows:
		\begin{equation}\label{f:SOSstar}
			f=\sum_{j=1}^sc_jq_{j}^{2}+
			\sum_{i=1}^{n}\phi_i\frac{\partial f}{\partial x_i},
		\end{equation}
		for some polynomials $q_1, \dots, q_s, \phi_1, \dots, \phi_n \in
		\Q[\x]$, and $c_1,\dots,c_s$ in $\Q_+$.
Let $x^{\star} \in \R^{n}$ be such that $f(x^{\star})=f^{\star}$.
		Then $x^{\star}$ is a critical point of $f$ over $\R^{n}$, i.e.,
		$x^{\star}$
		belongs to the variety $V_{\grad}(f)$; thus, we have
		$$\sum_{i=1}^{n}\phi_i(x^{\star})\frac{\partial f}{\partial
			x_i}(x^{\star})=0.$$ From \eqref{f:SOSstar}, we see that
    $f(x^{\star})=\sum_{j=1}^sc_jq_{j}^{2}(x^{\star})$ and so this value is
    non-negative. By assumption, for all $x$ in $\R^{n}$,
    $f(x)\geq f(x^{\star})$. Hence, $f$ is non-negative over $\R^n$.
	\end{proof}

	\begin{remark} Assume that $\rm Q$ is a real field and $\rm R$ is the real
    closure of $\rm Q$. All arguments in the proof of Theorem \ref{thm:main} can
    be applied for $f$ in ${\rm Q}[\x]$. Hence, the conclusion of Theorem
    \ref{thm:main} holds for the case ${\rm Q}[\x]$, i.e., $f$ is non-negative
    over ${\rm R}^n$ if and only if $f$ is an SOS of polynomials over the
    quotient ring ${\rm Q}[\x]/\I_{\grad}(f)$ provided that the two following
    conditions hold: the infimum $f^{\star}=\inf \{f(x):x\in {\rm R}^n\}$
    is attained;
    the gradient ideal
    $\I_{\grad}(f)$ is zero-dimensional and radical.
	\end{remark}

	\begin{remark}\label{rmk:phi1} In the proof of Theorem \ref{thm:main}, one
	can	see that  $f-h$ vanishes not only on
		$V_{\grad}(f)$ but also on the variety defined by $\left\langle
		x_2-v_2,\dots,x_n-v_n\right\rangle $.
		Hence, $\phi_1$ in \eqref{e3}  is zero and
		\eqref{e3}
becomes $f=c_1q^2_1+\dots+c_sq_s^2+\sum_{i=2}^{n}\phi_i(x_i-v_i)$.
	\end{remark}

	\begin{remark} The condition $\rm a)$ in Theorem \ref{thm:main} is used
	only to
		prove the reverse conclusion. Therefore, even without this condition,
		the
		following assertion still holds: if $\I_{\grad}(f)$ is zero-dimensional
		radical
		and  $f$ is non-negative over $\R^{n}$, then $f$ is SOS modulo
		$\I_{\grad}(f)$.
	\end{remark}


Theorem \ref{thm:main} provides certificates of non-negativity
for polynomials in $\Q[\x]$ which satisfy its
  assumptions and which are not SOS of polynomials with
  real (or rational) coefficients. We illustrate this with two examples.

  	\begin{example}\label{ex:Rob} We recall a polynomial of  Robinson
  	\cite{robinson73} that is non-negative but cannot be represented
  	as an SOS of polynomials,
  	$$\bar f_R=x_1^6+x_2^6+x_3^6-x_1^4x_2^2-x_1^4x_3^2-x_2^4x_1^2-
  	x_2^4x_3^2-x_3^4x_1^2-x_3^4x_2^2+3x_1^2x_2^2x_3^2.$$
  	By substituting the third variable $x_3$ by $1$ in $\bar f_R$,
  	we get the following non-negative polynomial:
  	$$f_R=x_1^6+x_2^6-x_1^4x_2^2+3x_1^2x_2^2-x_1^2x_2^4-
  	x_1^4-x_2^4-x_1^2-x_2^2+1.$$
Because $\bar f_R$ is the homogenization of $f_R$, $f_R$ cannot be
 represented as an SOS of polynomials \cite[Proposition 1.2.4]{marshall2008}.
The gradient ideal $\I_{\grad}(f_R)$ is zero-dimensional and radical.
Therefore, Theorem \ref{thm:main} tells us that $f_R$ is an SOS of polynomials
  	over the quotient ring $\Q[\x]/\I_{\grad}(f_R)$.
  \end{example}

	\begin{example}\label{ex:Schei} In \cite{scheiderer2016}, Scheiderer
	introduced
		the following homogeneous polynomial:
		$$\bar f_S=x_1^4+x_1x_2^3+x_2^4-3x_1^2x_2x_3-4x_1x_2^2x_3+
		2x_1^2x_3^2+x_1x_3^3+x_2x_3^3+x_3^4,$$
		that can be decomposed as an SOS of polynomials with algebraic
		coefficients but
		cannot be decomposed as an SOS of polynomials with rational
		coefficients. By
		replacing the third variable $x_3$ by $-1$, we obtain the non-negative
		polynomial
		$$f_S=x_1^4+x_1x_2^3+x_2^4+3x_1^2x_2+4x_1x_2^2+2x_1^2-x_1-x_2+1.$$
Note that the conclusion in \cite[Proposition 1.2.4]{marshall2008}
	holds for polynomials with rational coefficients, i.e., $g\in\Q[\x]$ is SOS
	in $\Q[\x]$ if and only if its homogenization so is in $\Q[\x]$. Hence,
	the polynomial $f_S$ is also SOS with algebraic coefficients but not SOS
	with rational ones. The gradient ideal $\I_{\grad}(f_S)$ satisfies
    the zero-dimensional and radical condition. Hence, according to Theorem
    \ref{thm:main}, $f_S$ is an SOS of polynomials over the quotient ring
    $\Q[\x]/\I_{\grad}(f_S)$.
	\end{example}

	An explicit SOS decomposition of $f_S$ will be given in
the next section.

	\subsection{Description of the algorithm}
	Based on the proof of Theorem \ref{thm:main}, we design an
    algorithm to
 compute an SOS decomposition of polynomials modulo the gradient ideal of
  a non-negative polynomial with rational coefficients.

	The input of $\sosshape$ is a non-negative polynomial
	$f\in\Q[\x]$ whose gradient ideal $\I_{\grad}(f)$ is zero-dimensional
and radical
	and satisfies the Shape Lemma assumption, i.e., all points in
	$V_{\grad}(f)$
	have distinct $x_1$-coordinates.

  The output includes the cardinality $\delta$ of $V_{\grad}(f)$, the lists of
  polynomials and numbers
  \[
   [w,v_2,\dots,v_n], \ [q_1,\dots,q_s],
    [\phi_2,\dots,\phi_n] \subset \Q[\x] , \text{ and } [c_1,\dots,c_s]
     \subset \Q_+
  \]
satisfying the relation
  \[
    f=\sum_{j=1}^sc_jq_{j}^{2}+ \sum_{i=2}^{n}\phi_i(x_i-v_i)
  \]
  In Step~1, we compute the reduced Gr\"{o}bner basis $G$ for $\I_{\grad}(f)$ by
  relying on a zero-dimensional rational parametrization of $V_{\grad}(f)$
  mentioned in Lemma~\ref{lm:parame}. In Step 2, we
    compute the quotients
  $\phi_2,\dots,\phi_n$ and the remainder $r$ of the division of $f$ by $G$. In Step 3, we
  compute a rational weighted SOS decomposition of
  the non-negative univariate polynomial $h$ by using Algorithm
  $\sosone$ or Algorithm $\sostwo$ described in \cite[Fig.
  1]{univsos} or \cite[Fig. 2]{univsos}, respectively.

	\begin{algorithm}
		\caption{Computing SOS of polynomials modulo the gradient
			ideal}\label{alg:SOSdecom}

		$\sosshape:=\text{proc}(f)$

		\textbf{Input:}  $f\in \Q[\x]$ non-negative over $\R^n$ such that
		$\I_{\grad}(f)$ is zero-dimensional and radical and all points in
		$V_{\grad}(f)$ have distinct $x_1$-coordinates

		\textbf{Output:} $\delta$ in $\N$, $[q_1,\dots,q_s]$,
		$[w,v_2,\dots,v_n] \subset \Q[x_1]$,
		$[\phi_2,\dots,\phi_n] \subset \Q[\x]$, and $[c_1,\dots,c_s] \subset \Q_+$ satisfying
		\begin{equation}\label{f:SOSQ}
			f=\sum_{j=1}^sc_jq_{j}^{2}
			+\sum_{i=2}^{n}\phi_i(x_i-v_i).
		\end{equation}

		\begin{itemize}
    \item [\rm 1:] Compute the reduced Gr\"{o}bner
      basis $G=[w,x_2-v_2,\dots,x_n-v_n]$ of $\I_{\grad}(f)$, with the
      lexicographical ordering $x_1< x_2< \cdots < x_n$
			\item [\rm 2:] {Compute the quotients $[\phi_2,\dots,\phi_n]$ and remainder $h$ of the division of $f$ by $G$ by performing
             $\mathtt{Eliminate}(f,1,v_2,\dots,v_n)$}
			\item [\rm 3:] Compute a rational weighted SOS decomposition
			$h=c_1q^2_1+\dots+c_sq_s^2$
			\item [\rm 4:] Return $\delta$, $[q_1,\dots, q_s]$,
			$[\phi_2,\dots,\phi_n]$, $[w,v_2,\dots,v_n]$, and $[c_1,\dots,c_s]$
		\end{itemize}
	\end{algorithm}

	\begin{remark}
		Suppose that the Shape Lemma assumption does not hold for
         $\I_{\grad}(f)$,
    i.e., there are two distinct points in $V_{\grad}(f)$ such that their
    $x_1$'s-coordinates are equal. As mentioned in the proof of Theorem
    \ref{thm:main}, we can find an invertible matrix $T$ given by \eqref{T},
    change of variables $\y=T\x$, and assign $g(\y):=f(T^{-1}\y)$. Here, we have
    $y_1=x_1+jx_2+\dots+j^{n-1}x_n$ for some $j>0$ and $y_i=x_i$ for
    $i=2,...,n$. We get a new non-negative polynomial in $n$ new variables with
    rational coefficients $g(\y)$ whose gradient ideal satisfies Shape Lemma
    assumption. Now we can apply Algorithm~$\sosshape$ for $g(\y)$ and obtain
    the output: the number $\bar \delta$, two lists $[\bar q_1,\dots,\bar q_s]$,
    $[\bar w,\bar v_2,\dots,\bar v_n]$ of polynomials in $\Q[y_1]$, a list
    $[\bar \phi_1,\dots,\bar \phi_n]$ of polynomials in $\Q[\y]$, and a list $[c_1,\dots,c_s] \subset \Q_+$. Since $\sharp
    V_{\grad}(f)=\sharp V_{\grad}(g)$, one has $\bar \delta=\delta$. The new
    polynomial $g$ can be decomposed as follows:
		$$g(\y)=\sum_{j=1}^sc_j\bar
		q_{j}^{2}(y_1)+\bar\phi_1(\y)\bar
		w(y_1)+\sum_{i=2}^{n}\bar\phi_i(\y)(y_i-\bar
		 v_i(y_1)).$$
		Hence, $f$ can be decomposed as:
		\begin{equation}\label{f:fx}
			f(\x)=\sum_{j=1}^sc_j\bar
			q_{j}^{2}(u(\x))+\bar\phi_1(T\x)\bar w(u(\x)) \medskip \\
			+\sum_{i=2}^{n}\bar\phi_i(T\x)(x_i-\bar v_i(u(\x))),
		\end{equation}
		where $u(\x)=x_1+jx_2+\dots+j^{n-1}x_n$.
		Clearly, $[w(u),x_2-\bar v_2(u),\dots,x_n-
		\bar v_n(u)]$ is also a basis for $V_{\grad}(f)$. Hence, \eqref{f:fx}
		provides us an SOS decomposition of $f$ modulo the gradient ideal of
		$f$.
	\end{remark}

	\begin{theorem}\label{thm:corectness} Let $f$ be a non-negative polynomial
	in
		$\Q[\x]$. Suppose that $f$ is  non-negative over $\R^n$,
		$\I_{\grad}(f)$ is
		zero-dimensional and radical, and all points in $V_{\grad}(f)$ have
		distinct
		$x_1$-coordinates. On input $f$, Algorithm $\sosshape$
		terminates and computes an SOS decomposition of $f$ modulo
      $\I_{\grad}(f)$ with rational coefficients.
	\end{theorem}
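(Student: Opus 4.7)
The plan is to follow, step by step, Case~1 of the proof of Theorem~\ref{thm:main}, which is precisely what Algorithm~\ref{alg:SOSdecom} implements. Since the hypothesis guarantees that distinct points of $V_{\grad}(f)$ have distinct $x_1$-coordinates, Case~2 of that proof is not needed, so the verification amounts to checking that each of the four steps terminates and produces the intermediate object expected in Case~1.

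For Step~1, I would invoke Lemma~\ref{lm:parame}, which produces polynomials $w, v_2, \ldots, v_n \in \Q[x_1]$ with $\deg v_i < \deg w$ such that $\I_{\grad}(f) = \langle w, x_2 - v_2, \ldots, x_n - v_n \rangle$. Under the Shape Lemma assumption, this generating set is already in shape position; since its leading terms for the lex order $x_1 < x_2 < \cdots < x_n$ are $\LT(w)$ and $x_2, \ldots, x_n$, which generate $\langle \LT(\I_{\grad}(f))\rangle$, it is (after normalising the leading coefficient of $w$ to $1$) the reduced Gröbner basis output by Step~1. Termination of the underlying parametrization procedure is also guaranteed by Lemma~\ref{lm:parame}.

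For Step~2, the call $\mathtt{Eliminate}(f, 1, v_2, \ldots, v_n)$ terminates by Lemma~\ref{lm:divshape} and returns polynomials $\phi_2, \ldots, \phi_n \in \Q[\x]$ and a remainder $h \in \Q[x_1]$ with $f = \sum_{i=2}^n \phi_i (x_i - v_i) + h$. By construction $h(x_1) = f(x_1, v_2(x_1), \ldots, v_n(x_1))$, which matches the polynomial called $h$ in the proof of Theorem~\ref{thm:main}; Remark~\ref{rmk:phi1} further ensures that no multiple of $w$ needs to appear in the decomposition, so the output is already of the form displayed in the specification.

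Finally, for Step~3, I would argue that $h$ is non-negative over $\R$: the map $t \mapsto (t, v_2(t), \ldots, v_n(t))$ parametrizes a real curve in $\R^n$ on which $f$ is non-negative by hypothesis, and $h(t)$ is exactly the restriction of $f$ to this curve. Theorem~\ref{w_SOS} then guarantees a weighted rational SOS decomposition $h = \sum_{j=1}^s c_j q_j^2$, which $\sosone$ or $\sostwo$ (see \cite{univsos}) computes in finitely many steps; combining it with the output of Step~2 yields exactly the identity~\eqref{f:SOSQ}. There is no serious obstacle here: the conceptual content has already been established in Theorem~\ref{thm:main}, Lemma~\ref{lm:parame}, and \cite{univsos}, and this proof is essentially a bookkeeping verification that the algorithm executes the constructive proof of Case~1 faithfully.
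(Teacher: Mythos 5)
Your proposal is correct and follows essentially the same route as the paper's own proof: Step~1 is justified via Lemma~\ref{lm:parame} and the Shape Lemma assumption, Step~2 via Algorithm \textsf{Eliminate} with the identification $h=f(x_1,v_2,\dots,v_n)$ (using Remark~\ref{rmk:phi1} to dispense with a multiple of $w$), and Step~3 via the restriction-to-the-curve argument together with Theorem~\ref{w_SOS} and the $\sosone$/$\sostwo$ routines, exactly as in Case~1 of Theorem~\ref{thm:main}.
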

	\begin{proof} Assume that $f\in\Q[\x]$ is non-negative over $\R^n$ and its
		gradient ideal is zero-dimensional and radical. Here, we use the
		lexicographic
		monomial order $x_1<x_2<\dots<x_n$. Because the  Shape Lemma's
		assumption
		holds, the reduced Gr\"{o}bner basis of $\I_{\grad}(f)$  in Step~1  has
		the
		form $G=[w,x_2-v_2,\dots,x_n-v_n]$, and can be computed by
		using a zero-dimensional rational parametrization of $V_{\grad}(f)$ as
		in
		Lemma~\ref{lm:parame}. {In Step 2, we
		compute
		the quotients $[\phi_2,\dots,\phi_n]$ and the remainder $r$ of the division of $f$ by $G$
		by performing $\mathtt{Eliminate}(f,1,v_2,\dots,v_n)$
        (as in Algorithm~\ref{alg:divshap}). Here, we see that $r$ coincides with $h$, where $h=f(x_1,v_2,\dots,x_n)$ as in the proof of Theorem \ref{thm:main}, because of
        $$r=f-\sum_{i=2}^{n}\phi_i(x_i-v_i)=h.$$
}
        \if Note that $h$ depends only on $x_1$, so $[\phi_2,\dots,\phi_n]$ can
         be obtained by performing $\mathtt{Eliminate}(f,1,v_2,\dots,v_n)$. \fi
        In Step 3, the univariate polynomial
		$h$ is non-negative with rational coefficients, so by using
		$\sosone$ or
		$\sostwo$ \cite{univsos}, we can compute an SOS decomposition
		of
		$h=c_1q^2_1+\dots+c_sq_s^2$. Hence, according to the proof of
		Theorem~\ref{thm:main}, we get \eqref{f:SOSQ} which is an SOS
		decomposition
		modulo the gradient ideal of $f$.
	\end{proof}

	To illustrate how the algorithm works, we consider the following simple
	example.

	\begin{example}
		Consider the polynomial $f(x_1,x_2)= 2x_1^4+2x_1x_2+x_2^2+10$. This
		polynomial
		is non-negative over $\R^n$. Firstly, the gradient ideal
		$\I_{\grad}(f)$
		is given
		by $\I_{\grad}(f)=\left\langle 8x_1^3 + 2x_2, 2x_1 + 2x_2
		\right\rangle$ which
		is zero-dimensional and radical with $\delta=3$. We compute the
		reduced
		Gr\"{o}bner basis of $\I_{\grad}(f)$, namely
		$\left\langle x_1^3-\frac{1}{4}x_1, x_2+x_1\right\rangle$, here
		$v_2(x_1)=-x_1$. Secondly, with the order $x_1<x_2$,
		the
		quotients of the division of $f$ by the Gr\"{o}bner basis are
		$\phi_1=0$
		and $\phi_2=x_1+x_2$, {and the remainder is given by
        $h(x_1)=f(x_1,v_2)=2x_1^4-x_1^2+10$.} Thirdly, by using Algorithm
		$\sostwo$  in \cite{univsos}, one gets an SOS decomposition of
$h=\frac{1}{2}x_1^4+\frac{3}{2}(x_1^2-\frac{5}{2})^2+\frac{13}{2}x_1^2+
\frac{5}{8}$.
		Finally, we obtain the following SOS decomposition of $f$ modulo its
		gradient ideal:
		\[f=\frac{1}{2}x_1^4+\frac{3}{2}\Big(x_1^2-\frac{5}{2}\Big)^2+
		\frac{13}{2}x_1^2+\frac{5}{8}+
		(x_1+x_2)\times \left({x_2+x_1}\right). \]
	\end{example}

	\subsection{Bit complexity analysis}
	This subsection investigates the bit complexity of $\sosshape$. Assume that
  $d$ and $\tau$ are respectively the degree and an upper bound of the bitsize
  of the coefficients of $f\in \Q[\x]$. We provide estimates for the bitsizes of
  polynomials in the output of $\sosshape(f)$ as well as for the number of
  required boolean operations required to execute it.

	We use Algorithm $\sosone$ in \cite[Fig. 1]{univsos} or Algorithm
	$\sostwo$ in \cite[Fig. 2]{univsos} to compute an SOS decomposition
	of
	the non-negative univariate polynomial $h$. The corresponding bit
	complexities are given as follows:

\begin{proposition}\label{prop:bitsos1} Let $v_2,\dots,v_n$ be as in
Lemma~\ref{lm:parame} and  $h(x_1)=f(x_1, v_2,\dots,
		v_n)$. To compute an SOS decomposition of $h$, Algorithm
		$\sosone$ and Algorithm $\sostwo$  run in
		\begin{equation}\label{f:bitsos1}
			\widetilde O\Big((
			d^{n+1}/2)^{3d^{n+1}/2}(\tau+n+d)d^{3n+1}\Big)
		\end{equation}
		and
		\begin{equation}\label{f:bitsos2}
		\widetilde O\left((\tau+n+d)d^{6n+4}\right)
		\end{equation}
		boolean operations, respectively.
	\end{proposition}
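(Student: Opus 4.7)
The plan is to reduce the proposition to the known bit-complexity estimates of the univariate algorithms $\sosone$ and $\sostwo$ from \cite{univsos}, which express the cost of decomposing a non-negative polynomial in $\Q[t]$ in terms of its degree and bitsize. The only real work is therefore to control the degree $d_h := \deg h$ and the bitsize $\tau_h := \hht(h)$ of the composed polynomial $h(x_1) = f(x_1, v_2(x_1), \ldots, v_n(x_1))$, and then to substitute these bounds into the univariate complexity estimates.

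First, I would bound the degree. By Lemma~\ref{lm:parame}, each $v_i$ is a univariate polynomial of degree at most $(d-1)^n$, and $f$ has total degree $d$, so a direct substitution gives $d_h \leq d \cdot (d-1)^n \leq d^{n+1}$.

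Next, I would bound the bitsize. Writing $f = \sum_{|\alpha|\leq d} c_\alpha \x^\alpha$, one expands
\[
h(x_1) = \sum_{|\alpha|\leq d} c_\alpha \, x_1^{a_1} v_2(x_1)^{a_2} \cdots v_n(x_1)^{a_n}.
\]
By Lemma~\ref{lm:parame} each $v_i$ has bitsize $\widetilde O((\tau+n+d)d^{3n})$, and the standard estimate that multiplying at most $d$ univariate polynomials of degree at most $(d-1)^n$ and bitsize $T$ yields a polynomial of bitsize $\widetilde O(dT)$ (see e.g.\ \cite[Ch.~9]{gathen2013}) gives each individual monomial of $h$ a bitsize in $\widetilde O((\tau+n+d)\, d^{3n+1})$. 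Summing over the at most $\binom{n+d}{d}$ monomials of $f$, whose coefficients have bitsize at most $\tau$, only increases this by an additive $\log\binom{n+d}{d}$, absorbed into the $\widetilde O$. Hence $\tau_h = \widetilde O((\tau+n+d)\, d^{3n+1})$.

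Finally, I would substitute $d_h \leq d^{n+1}$ and $\tau_h = \widetilde O((\tau+n+d)\, d^{3n+1})$ into the bit-complexity bounds recalled from \cite{univsos}, namely $\widetilde O((d_h/2)^{3d_h/2}\,\tau_h)$ for $\sosone$ and $\widetilde O(d_h^3\, \tau_h)$ for $\sostwo$. The latter substitution yields $\widetilde O(d^{3(n+1)}(\tau+n+d) d^{3n+1}) = \widetilde O((\tau+n+d) d^{6n+4})$, which is \eqref{f:bitsos2}; the former gives \eqref{f:bitsos1} directly. The main technical obstacle is the bitsize tracking for $h$: one must ensure that the $d$-fold product and the $\binom{n+d}{d}$-fold sum contribute only factors that can be hidden inside $\widetilde O$, so that the exponent $3n+1$ of $d$ in $\tau_h$ is tight — any looser estimate would immediately degrade the final exponent $6n+4$ appearing in \eqref{f:bitsos2}.
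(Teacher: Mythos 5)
Your proposal is correct and follows essentially the same route as the paper's proof: bound $\deg h \leq d(d-1)^n \leq d^{n+1}$ and $\hht(h)=\widetilde O\left((\tau+n+d)d^{3n+1}\right)$ from the degree and bitsize estimates on $v_2,\dots,v_n$ in Lemma~\ref{lm:parame}, then substitute these into the univariate complexity bounds of \cite[Theorems 17 and 24]{univsos}. The only cosmetic difference is that the paper first clears denominators, replacing $h$ by $\beta h\in\Z[x_1]$ (which at most doubles the relevant bitsize) before invoking those theorems, whereas you track the bitsize of $h$ directly via product and sum estimates; both give the same asymptotics.
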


	\begin{proof}  Let $\tau_v = \max_{i}\{\hht(v_i)\}$. Lemma
		\ref{lm:parame} states that the bitsize of $\tau_v$ is bounded from
		above by $\footnotesize{\widetilde O}\left((\tau+n+d)d^{3n}\right)$,
		and that the polynomials $w,v_2,\dots,v_n$ have degree at most
		$(d-1)^n$.
		Since $\deg f=d$ and $h(x_1)=f(x_1,v_2,\dots,v_n)$, the
		degree of $h$ is at most $d(d-1)^{n}$.

		Let $\beta$ be the minimal common denominator of all non-zero
         coefficients
    of $h$. Computing an SOS decomposition of $h$ boils down to computing an SOS
    decomposition of $\beta h$. In particular, the execution time of $\sosone$
    (resp., $\sostwo$) on $h$ is the same as for $\beta h$.
		Now we estimate the bitsize of the polynomial $\beta h\in\Z[x_1]$. By
		the definition of $h$, we observe that $\hht(h)\leq \tau+d\tau_v.$
		It follows that
		$\hht(\beta h)\leq \hht(\beta)+\tau+d\tau_v$. By definition we have
		$\hht(\beta)\leq \tau+d\tau_v$.
		This yields
		\begin{equation}\label{f:tauah}
			\hht(\beta h)\leq 2\left(\tau+d\tau_v\right).
		\end{equation}
		From \eqref{f:tauah} and above results, we obtain the following bitsize
		estimate for $\beta h$: $$\footnotesize{\widetilde
			O}\left(2(\tau+d(\tau+n+d)d^{3n})\right)=\footnotesize{\widetilde
			O}\left((\tau+n+d)d^{3n+1}\right).$$

		To compute an SOS decomposition of $\beta h$, we rely on
		$\sosone$ or $\sostwo$. From \cite[Theorem~17]{univsos} and
		\cite[Theorem 24]{univsos}, the boolean
		running time
		of 	$\sosone$ corresponds to the quantity given by
		\eqref{f:bitsos1}. If we use $\sostwo$, then the number
		of boolean operations will be bounded from above by
		$$\footnotesize{\widetilde
			O}\left(d^4(d-1)^{4n}+d^4(\tau+n+d)(d-1)^{6n}\right),$$
		which can be further reduced to \eqref{f:bitsos2}.
	\end{proof}

	\begin{proposition}\label{prop:bitshape} Let $v_2,\dots,v_n$ be as
		in Proposition \ref{prop:bitsos1}. To compute the list
		$\phi_2,\dots,\phi_n$ in the output of Algorithm
		$\sosshape$, Algorithm $\mathtt{Eliminate}$
        runs in
		$\footnotesize{\widetilde O}(n^2(\tau+n+d)d^{3n+1})$ boolean operations
		and the bitsizes of $\phi_2,\dots,\phi_n$ are $\widetilde
		O\left(n(\tau+n+d)d^{3n+1}\right)$.
	\end{proposition}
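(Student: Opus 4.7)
The plan is to apply Lemma~\ref{lm:divshape} directly to the call $\mathtt{Eliminate}(f, 1, v_2, \dots, v_n)$ performed in Step~2 of $\sosshape$. The dividend is $g = f$, which has total degree $d$ (in particular, degree at most $d$ in $x_2, \dots, x_n$) and bitsize $\tau_g = \tau$. The divisors are given by $a_0 = 1$ and $a_i = v_i$ for $i = 2, \dots, n$; from Lemma~\ref{lm:parame}, the bitsizes of the $v_i$ are bounded by $\tau_a := \widetilde O\bigl((\tau+n+d)d^{3n}\bigr)$, which dominates the trivial bitsize of $a_0 = 1$.

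Plugging these parameters into the running-time bound of Lemma~\ref{lm:divshape}, the number of boolean operations used by $\mathtt{Eliminate}$ is
\[
\widetilde O\bigl(n\tau_g + n^2 d\, \tau_a\bigr) = \widetilde O\bigl(n\tau + n^2 d (\tau+n+d)d^{3n}\bigr) = \widetilde O\bigl(n^2(\tau+n+d)d^{3n+1}\bigr),
\]
since the term $n^2 d \tau_a$ dominates $n\tau$. Similarly, the bitsize bound from Lemma~\ref{lm:divshape} on each quotient $\phi_i$ gives
\[
\widetilde O\bigl(\tau_g + n d\, \tau_a\bigr) = \widetilde O\bigl(\tau + nd(\tau+n+d)d^{3n}\bigr) = \widetilde O\bigl(n(\tau+n+d)d^{3n+1}\bigr),
\]
matching the two announced estimates.

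The argument is therefore a bookkeeping composition of the bitsize bound for the shape-position polynomials $v_2, \dots, v_n$ (Lemma~\ref{lm:parame}) with the generic complexity analysis of $\mathtt{Eliminate}$ (Lemma~\ref{lm:divshape}). I do not foresee any real obstacle; the only point to verify is that the degree of $f$ in the variables $x_2, \dots, x_n$ is bounded by its total degree $d$, so the parameter denoted $d$ in Lemma~\ref{lm:divshape} coincides with the input degree of $f$.
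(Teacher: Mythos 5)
Your argument is correct and coincides with the paper's own proof: both instantiate Lemma~\ref{lm:divshape} on the call $\mathtt{Eliminate}(f,1,v_2,\dots,v_n)$ with $\tau_g=\tau$, degree $d$ in $x_2,\dots,x_n$, and $\tau_a=\widetilde O\bigl((\tau+n+d)d^{3n}\bigr)$ taken from Lemma~\ref{lm:parame}, and then simplify the resulting bounds. Nothing further is needed.
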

	\begin{proof}  From Lemma \ref{lm:parame}, the polynomial $v_i$
		has 
		bitsize at most
		$\footnotesize{\widetilde O}((\tau+n+d)d^{3n})$.
\if Hence, the degree of $h$ is at most $d(d-1)^{n}$ and
the bitzise of $h$	is at most $\footnotesize{\widetilde
			O}\left(d(\tau+n+d)d^{3n}\right)$. \fi
	{We divide $f$ by $[x_2-v_2,\dots,x_n-v_n]$ while  performing
		$\mathtt{Eliminate}(f,1,v_2,\dots,v_n)$ as in
		Algorithm~\ref{alg:divshap} to obtain the list of quotients
        $[\phi_2,\dots,\phi_n]$ and the remainder $h=h(x_1,v_2,\dots,v_n)$.}
Applying Lemma \ref{lm:divshape} for this division, we conclude that
Algorithm $\mathtt{Eliminate}$ runs in
		$\footnotesize{\widetilde O}(n^2(\tau+n+d)d^{3n+1})$ boolean
		operations, the estimate for the bitsize of $\phi_i$ is $\widetilde
		O\left(n(\tau+n+d)d^{3n+1}\right)$ as claimed.
	\end{proof}

	We are now ready to analyze the bit complexity of Algorithm
	\ref{alg:SOSdecom}.

	\begin{theorem}\label{thm:bit}
		Let $f \in \Q[\x]$ of degree $d$ and let $\tau$ be the maximum bitsize
         of its coefficients. Assume that the two conditions in Theorem
          \ref{thm:main} hold. Then, on input $f$, Algorithm $\sosshape$ runs in
		\begin{equation}\label{f:bitAlg41a}
			\widetilde O\Big((\tau+n+d)^2d^{6n}
			+(\tau+n+d)d^{3n+1}(d^{n+1}/2)^{3d^{n+1}/2}\Big)
		\end{equation}
		or
		\begin{equation}\label{f:bitAlg41b}
		\widetilde O\Big((\tau+n+d)^2d^{6n}+
			(\tau+n+d)d^{6n+4}\Big)
		\end{equation}
		boolean operations if in Step 3 we use Algorithm $\sosone$ or
		Algorithm $\sostwo$, respectively.
	\end{theorem}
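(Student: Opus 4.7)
The plan is to aggregate the three bit-complexity estimates already established for the three steps of Algorithm~$\sosshape$ and retain only the dominating contributions; nothing new needs to be proved beyond combining the previously available lemmas and propositions.

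First, I would invoke Lemma~\ref{lm:parame} for Step~1, which constructs the reduced Gr\"{o}bner basis $G=[w,x_2-v_2,\ldots,x_n-v_n]$ of $\I_{\grad}(f)$ from the underlying zero-dimensional rational parametrization: this costs $\widetilde O((\tau+n+d)^2 d^{6n})$ boolean operations. Next, Proposition~\ref{prop:bitshape} accounts for Step~2: executing $\mathtt{Eliminate}(f,1,v_2,\ldots,v_n)$ to produce the quotients $\phi_2,\ldots,\phi_n$ and the remainder $h\in\Q[x_1]$ requires $\widetilde O(n^2(\tau+n+d)d^{3n+1})$ operations. Finally, Proposition~\ref{prop:bitsos1} handles Step~3: extracting a weighted SOS decomposition of the non-negative univariate polynomial $h$ costs $\widetilde O((\tau+n+d)d^{3n+1}(d^{n+1}/2)^{3d^{n+1}/2})$ operations if $\sosone$ is used, and $\widetilde O((\tau+n+d)d^{6n+4})$ operations if $\sostwo$ is used instead.

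To conclude, I would sum the three contributions and simplify. The Step~2 bound is absorbed by the Step~1 bound, since $n^2 d^{3n+1} = O(d^{6n})$ for $n\geq 1$ and $(\tau+n+d)\leq (\tau+n+d)^2$. The total therefore collapses into the sum of Step~1 and Step~3, which matches \eqref{f:bitAlg41a} when Step~3 relies on $\sosone$, and \eqref{f:bitAlg41b} when Step~3 relies on $\sostwo$. I do not anticipate any genuine obstacle: the argument is essentially bookkeeping over the three cited results, and the only mild check is the exponent inequality used to absorb Step~2 into Step~1, which is routine.
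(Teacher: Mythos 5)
Your proposal follows essentially the same route as the paper: sum the per-step bounds from Lemma~\ref{lm:parame}, Proposition~\ref{prop:bitshape} and Proposition~\ref{prop:bitsos1}, then discard dominated terms. The one item you omit is that Step~1 of $\sosshape$ must first \emph{compute} the zero-dimensional rational parametrization of $V_{\grad}(f)$; the bound \eqref{f:compv} in Lemma~\ref{lm:parame} only covers the conversion of a given parametrization into the shape-position basis, so the cost \eqref{f:Osim2} of Corollary~\ref{cor:bit1}, namely $\widetilde O\big(n^2(d+\tau)d^{2n+1}\binom{n+d}{d}\big)$, must be added to your sum, as the paper does. This omission is harmless for the final statement, since $\binom{n+d}{d}\leq (d+1)^n\leq d^{2n}$ for large enough $d$ and $n$ makes that term dominated by $(\tau+n+d)^2d^{6n}$ (and by the Step~3 terms), so both \eqref{f:bitAlg41a} and \eqref{f:bitAlg41b} stand; your remaining bookkeeping, including absorbing the $\mathtt{Eliminate}$ cost, is correct.
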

	\begin{proof} Assume that in Step 3 we use $\sosone$ to compute an
	SOS
		decomposition of  $h$. Then, the number of boolean operations that
		$\sosshape$ uses to compute the SOS decomposition of $f$ is
		the sum of the four following ones:
    \begin{enumerate}
    \item the number of boolean operations required to compute the
      zero-dimensional rational parametrization $\mathcal{Q}$ of $V_{\grad}(f)$
      as in \eqref{f:Osim2};
    \item the number of boolean operations required to compute
      $w,v_2,\dots,v_n \in \Q[x_1]$, defined in Lemma \ref{lm:parame} as in
      \eqref{f:compv};
    \item the number of boolean operations required to compute an SOS
      decomposition of $h$ by using Algorithm $\sosone$ as in \eqref{f:bitsos1};
    \item the number of boolean operations required to compute
      $\phi_1,\dots,\phi_n$ in the output of $\sosshape$ by using
      Algorithm $\mathtt{Eliminate}$  (mentioned in
      Proposition~\ref{prop:bitshape}).
    \end{enumerate}
This sum equals
      \begin{align*}
        \widetilde
        O\Big(n^2(d+\tau)d^{2n+1}\binom{n+d}{d}+(\tau+n+d)^2d^{6n}
        +(\tau+n+d)d^{3n+1}\Big(\frac{d^{n+1}}{2}\Big
        )^{3d^{n+1}/2}+\\ (\tau+n+d)n^2d^{3n+2}\Big).
      \end{align*}
      In this sum, the third term is larger than the first and last terms for
      large enough $d$ and $n$, yielding the estimate \eqref{f:bitAlg41a}.

If in Step 3 we use $\sostwo$, the number of
	boolean operations of the algorithm is
	$$\widetilde O\Big(n^2(d+\tau)d^{2n+1}\binom{n+d}{d}+(\tau+n+d)^2d^{6n}+
	(\tau+n+d)d^{6n+4}	+n^2(\tau+n+d)d^{3n+2}\Big).$$
	Noting that $\binom{n+d}{d}\leq  (d+1)^n\leq d^{2n}$ for large enough $d$
	and $n$, we obtain \eqref{f:bitAlg41b}.
	\end{proof}

	\begin{theorem}
        Assume that $f\in \Q[\x]$ satisfies the conditions of
        Theorem \ref{thm:bit}. Let $w,v_2,\dots,v_n$, $h$ be as in Proposition
         \ref{prop:bitsos1}.
		Then, the
		maximum bitsize of the coefficients involved in the SOS decomposition
		of $h$
		obtained by using Algorithm $\sosone$  and
		Algorithm $\sostwo$  are bounded
		from
		above, respectively, by
		\begin{equation}\label{f:bitsize1}
			O\Big((\tau+n+d)( d^{n+1}/2)^{3d^{n+1}/2}d^{3n+1}\Big),
		\end{equation}
		and
		\begin{equation}\label{f:bitsize2}O\left((\tau+n+d)d^{5n+2}\right).
        \end{equation}
	\end{theorem}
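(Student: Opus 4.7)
The plan is to derive these bounds by plugging the degree and bitsize estimates of $h$ already established in the proof of Proposition \ref{prop:bitsos1} directly into the output-bitsize bounds proved in \cite[Theorem 17]{univsos} for $\sosone$ and \cite[Theorem 24]{univsos} for $\sostwo$.

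First, I would recall two key facts from Lemma \ref{lm:parame} and the proof of Proposition \ref{prop:bitsos1}: the polynomial $h(x_1)=f(x_1,v_2,\ldots,v_n)$ has degree at most $d(d-1)^n \leq d^{n+1}$, and after multiplying by the least common denominator $\beta$ of its coefficients, $\beta h \in \Z[x_1]$ has bitsize at most $\widetilde O((\tau+n+d)d^{3n+1})$. Since computing an SOS decomposition of $h$ reduces to computing one of $\beta h$ and then absorbing the positive rational factor $1/\beta$ into the weights $c_j$, the bitsize of the coefficients produced by $\sosone$ (resp.\ $\sostwo$) on $h$ is controlled by the corresponding output-bitsize bound applied to $\beta h$.

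Next, I would invoke \cite[Theorem 17]{univsos}, which bounds the bitsize of the coefficients output by $\sosone$ as a function of the degree and bitsize of its univariate input, and substitute $d_h \leq d^{n+1}$ together with $\tau_h = \widetilde O((\tau+n+d)d^{3n+1})$; this yields \eqref{f:bitsize1}, whose shape mirrors the complexity bound \eqref{f:bitsos1}. The analogous argument using \cite[Theorem 24]{univsos}, which provides a bound polynomial in both $d_h$ and $\tau_h$ for the output of $\sostwo$, produces \eqref{f:bitsize2} after simplification.

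The main obstacle is the exponent bookkeeping for the $\sostwo$ case: to collapse the estimate to $d^{5n+2}$ rather than a looser $d^{5n+3}$, one must use the sharp bound $d(d-1)^n$ instead of the cruder $d^{n+1}$ wherever $\deg h$ occurs in the formulas from \cite{univsos}, combined with the simplification $\binom{n+d}{d} \leq d^{2n}$ valid for large $d$ and $n$, exactly as already exploited in Proposition \ref{prop:bitsos1} and Theorem \ref{thm:bit}.
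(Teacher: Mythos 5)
Your overall route is the same as the paper's: take the degree bound $d(d-1)^n$ and the bitsize bound $\widetilde O\left((\tau+n+d)d^{3n+1}\right)$ for $\beta h$ established in the proof of Proposition \ref{prop:bitsos1}, and substitute them into the output-bitsize bounds of \cite{univsos} for $\sosone$ and $\sostwo$; that is precisely the paper's two-line argument. Two slips are worth flagging. First, the output-bitsize bounds are \cite[Theorem 16]{univsos} and \cite[Theorem 23]{univsos}; Theorems 17 and 24, which you cite for this purpose, are the running-time estimates (they are the ones used in Proposition \ref{prop:bitsos1} and Theorem \ref{thm:bit}). Second, the ``exponent bookkeeping'' in your last paragraph is not the mechanism at work: the estimate $\binom{n+d}{d}\le d^{2n}$ never enters this theorem (it only appears in the complexity analysis of the rational parametrization step in Theorem \ref{thm:bit}), and replacing $d^{n+1}$ by the sharper $d(d-1)^n$ does not in general gain a factor of $d$, since the two quantities differ only by the factor $(1-1/d)^n$. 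The paper obtains \eqref{f:bitsize1} and \eqref{f:bitsize2} directly by plugging the degree and bitsize of $\beta h$ into the formulas of Theorems 16 and 23 of \cite{univsos}, with no further refinement; so your plan is essentially right, but the justification you give for the exact exponent in \eqref{f:bitsize2} would not survive scrutiny and should simply be replaced by the direct substitution into the correct theorem.
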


	\begin{proof}
		From the proof of Proposition \ref{prop:bitsos1}, the estimates for
		degree and
		bitsize of $\beta h$
		are $d(d-1)^n$ and $\footnotesize{\widetilde
			O}\left((\tau+n+d)d^{3n+1}\right)$, respectively. According to
			\cite[Theorem~16]{univsos} and \cite[Theorem 23]{univsos}, the
			maximum
			bitsize
		of the
		coefficients involved in the SOS decomposition of $\beta h$ obtained by
		using
		$\sosone$  and  $\sostwo$ are bounded from above by
		\eqref{f:bitsize1} and \eqref{f:bitsize2}, respectively.
	\end{proof}

	\section{SOS of rational fractions modulo gradient ideals}\label{sec:rat}
	Artin's Theorem \cite{artin1927} states that if $f\in \R[\x]$ is
	non-negative then there exists a nonzero $g\in \R[\x]$ such that $g^2f$ is
	SOS, yielding a decomposition of $f$ as an SOS of rational fractions. In
	this section, we explain how to decompose $f \in \Q[\x]$ as an SOS of
	rational fractions modulo its gradient ideal.
	One says that $f \in \Q[\x]$ is an \textit{SOS of rational fractions}
	in $\Q(\x)$, where $\Q(\x)$ is the field of rational fractions in the
	variable $\x$ over $\Q$, if there exist rational fractions
	$f_1,\dots,f_s$ in $\Q(\x)$ and $[c_1,\dots,c_s] \subset \Q_+$ such that
	$f=\sum_{j=1}^sc_jf_{j}^{2}$. Furthermore, $f$ is an SOS of
	rational fractions over the quotient ring $\Q(\x)/\I_{\grad}(f)$ if
	there exists $g\in \I_{\grad}(f)$ such that $f-g$ is an SOS of
	rational fractions in $\Q(\x)$, i.e., $f$ can be decomposed as
	follows:
	\begin{equation*}\label{f1:SOSf}
		f=\sum_{j=1}^sc_jf_{j}^{2}+
		\sum_{i=1}^{n}\phi_i\frac{\partial f}{\partial x_i},
	\end{equation*}
	for some rational fractions
    $f_1, \dots, f_s, \phi_1, \dots, \phi_s$ in $\Q(\x)$ and $[c_1,\dots,c_s] \subset \Q_+$.

	\subsection{The existence of an SOS decomposition over the rationals}
	Denote by $\Q(x_1)[x_2,\dots,x_n]$  the vector space of
	polynomials in $n-1$ variables $(x_2,\dots,x_n)$ with coefficients in
	$\Q(x_1)$.

	In the following theorem, we prove the existence of an SOS decomposition of
	rational fractions modulo the gradient ideal for $f$.

	\begin{theorem}\label{thm:SOSF}
		Assume that $f\in \Q[\x]$ is a non-negative polynomial of degree $d$ and that
		$\I_{\grad}(f)$ is zero-dimensional and radical. Let $\mathcal{Q} =
		((w,\kappa_1,\dots,\kappa_n),x_1)$ be a zero-dimensional rational
		parametrization of
		$V_{\grad}(f)$. Then,
         $f$ can be decomposed as an SOS of rational
		fractions modulo the gradient ideal, namely
		\begin{equation}\label{f:SOSR}
f=\frac{1}{(w')^d}
\sum_{j=1}^sc_jq_{j}^{2}+\sum_{i=1}^{n}\phi_i(x_i-\frac{\kappa_i}{w'}),
		\end{equation}
		for some rational fractions $q_1, \dots, q_s\in\Q(x_1)$, $\phi_1,
		\dots, \phi_n \in \Q(x_1)[x_2,\dots,x_n]$, and $[c_1,\dots,c_s] \subset \Q_+$.
	\end{theorem}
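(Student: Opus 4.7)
The strategy is to adapt Case~1 of the proof of Theorem~\ref{thm:main}, but to replace the shape-position Gr\"{o}bner basis $[w, x_2 - v_2, \dots, x_n - v_n]$ by the rational parametrization data $\{x_i - \kappa_i/w'\}_{i=2}^n$, now viewed inside $\Q(x_1)[x_2,\dots,x_n]$. The crucial observation is that each polynomial $w'(x_1)\,x_i - \kappa_i(x_1)$ vanishes on $V_{\grad}(f)$ by definition of $\mathcal{Q}$; since $\I_{\grad}(f)$ is radical, it follows that $w'\,x_i - \kappa_i \in \I_{\grad}(f)$. Hence, modulo the gradient ideal extended to $\Q(\x)$, we may manipulate the expressions $x_i - \kappa_i/w'$ freely, clearing denominators at the end.

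First, I would divide $f$, viewed as an element of $\Q(x_1)[x_2,\dots,x_n]$, by the list $[x_2 - \kappa_2/w',\dots,x_n - \kappa_n/w']$ (monic in each $x_i$) for the lex order $x_2 < \dots < x_n$, using for example Algorithm $\mathtt{Eliminate}$. This yields $\phi_2,\dots,\phi_n \in \Q(x_1)[x_2,\dots,x_n]$ and a remainder $r \in \Q(x_1)$ satisfying $f = \sum_{i=2}^n \phi_i(x_i - \kappa_i/w') + r$. Substituting $x_i = \kappa_i/w'$ in this identity identifies $r(x_1) = h(x_1) := f(x_1, \kappa_2/w', \dots, \kappa_n/w')$, a rational function whose denominator divides $(w')^d$ because $f$ has total degree $d$.

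Second, I would establish that $h$ admits a rational SOS representation with a power of $w'$ as denominator. The map $x_1 \mapsto (x_1, \kappa_2(x_1)/w'(x_1),\dots,\kappa_n(x_1)/w'(x_1))$ parametrizes a real curve in $\R^n$ over the open set $\{x_1 \in \R : w'(x_1) \neq 0\}$, and non-negativity of $f$ on $\R^n$ forces $h(x_1) \geq 0$ there. Choosing $k$ with $2k \geq d$, the polynomial $(w')^{2k} h \in \Q[x_1]$ is non-negative on a dense subset of $\R$, hence on all of $\R$ by continuity. Pourchet's theorem (Theorem~\ref{w_SOS}) then yields $c_j \in \Q_+$ and $\tilde q_j \in \Q[x_1]$ with $(w')^{2k} h = \sum_j c_j \tilde q_j^2$. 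After absorbing a suitable square power of $w'$ into each $\tilde q_j$, one obtains $h = \frac{1}{(w')^d}\sum_j c_j q_j^2$ with $q_j \in \Q(x_1)$. Plugging back into the division identity gives exactly~\eqref{f:SOSR}.

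The main obstacle I anticipate is the parity issue just alluded to: when $d$ is odd, $(w')^d$ is not a square in $\Q(x_1)$, so the expression $\frac{1}{(w')^d} c_j q_j^2$ is not \emph{verbatim} a square of a single rational fraction. Handling this requires either taking $2k = d+1$ and then verifying that the extra factor of $w'$ can be consistently absorbed into the $q_j$'s via a square adjustment, or invoking the weaker ``modulo $\I_{\grad}(f)$'' interpretation whereby the non-square part of the denominator can be compensated by an element of the gradient ideal. In any case, the substantive content --- existence of an SOS-of-rational-fractions certificate modulo $\I_{\grad}(f)$ --- follows by combining the multivariate division identity, non-negativity of $h$ on $\R$, and Pourchet's theorem in dimension one.
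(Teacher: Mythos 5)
Your proposal follows essentially the same route as the paper: substitute the parametrization $x_i=\kappa_i/w'$ to produce the univariate object $\bar h=(w')^d\,f(x_1,\kappa_2/w',\dots,\kappa_n/w')$, certify it by the Landau--Pourchet result (Theorem \ref{w_SOS}), and account for the difference by the terms $x_i-\kappa_i/w'$. The only organizational difference is harmless: the paper divides the polynomial $(w')^df-\bar h$ by the system $[w'x_1-\kappa_1,\dots,w'x_n-\kappa_n]$ in $\Q[\x]$ and kills the univariate remainder by the separating-form/degree argument (as in Theorem \ref{thm:main}), whereas you divide $f$ directly over $\Q(x_1)$ by the monic linear list $[x_2-\kappa_2/w',\dots,x_n-\kappa_n/w']$, so the remainder is identified as $h$ at once by substitution; both yield the same identity, and your radicality remark about $w'x_i-\kappa_i\in\I_{\grad}(f)$ is only needed for the ``modulo the gradient ideal'' interpretation, not for \eqref{f:SOSR} itself.

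The one point you left unresolved --- the parity of $d$ --- is in fact vacuous, and this is exactly the observation the paper makes: since $f$ is non-negative on $\R^n$ (and nonconstant), its degree $d$ is even, so $(w')^d=\big((w')^{d/2}\big)^2$ and $\bar h=(w')^{d}h$ is already a non-negative polynomial; Pourchet applied to $\bar h$ gives \eqref{f:SOSR} directly, with no exponent adjustment. Be aware that your proposed fallbacks for odd $d$ would not actually work: taking $2k=d+1$ leaves a single factor of $w'$ in the denominator, which is not a square in $\Q(x_1)$ and cannot be ``absorbed via a square adjustment,'' nor can it be compensated by an element of the gradient ideal in any evident way. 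So the correct fix is not a workaround but the even-degree observation; with it added, your argument is complete and matches the paper's proof.
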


	\begin{proof}  By substituting $x_i=\kappa_i/w'$ in $f$, for
		$i=2,\dots,n$, one has
		\begin{equation}\label{f:barh}
			f\Big( x_1,\frac{\kappa_2}{w'},\dots,\frac{\kappa_n}{w'}\Big) =
			\frac{1}{(w')^d}\bar h,
		\end{equation}
		where $\bar h(x_1)$ is a univariate polynomial.
		Since $f$ is non-negative with even degree $d$, $\bar h$ is also
		non-negative.
		In addition, the coefficients of $w',\kappa_1,\dots,\kappa_n$ and $f$
		are rational numbers, so the coefficients of $\bar h$ are also
		rational numbers. Applying Theorem \ref{w_SOS} for $\bar h$, we
		conclude that there are  $q_1, \dots,q_s \in
		\Q[x_1]$ and $[c_1,\dots,c_s] \subset \Q_+$ such that
		\begin{equation}\label{f:bhsos}
			\bar h=\sum_{j=1}^{s}c_jq_j^2.
		\end{equation}

		Next, one considers the division of $(w')^df-\bar h$ by
		$[w'x_1-\kappa_1,\dots,w'x_n-\kappa_n]$ with the lexicographic order
		$x_1<\dots<x_n$. Based on Buchberger's Criterion \cite{buchberger1965},
		we
		can show that this system is a Gr\"{o}bner basis of the ideal generated
		by
		this system w.r.t the order $<$ in $\Q[\x]$. Hence, there exist a
		(unique)
		list of quotients $u_1,\dots,u_n$ in $\Q[\x]$, and $r$ in $\Q[x_1]$ such
		that
		\begin{equation}\label{f:fhr}
			(w')^df-\bar h=\sum_{i=1}^{n}u_i(w'x_i-\kappa_i)+r,
		\end{equation}
		with $r$ of smaller degree than the cardinality $\delta$ of
       $V_{\grad}(f)$.
		The gradient variety of $f$ can be represented as follows:
		$$V_{\grad}(f)=\{\x\in \C^n: w=0, w'x_1-\kappa_1=\dots
		=w'x_n-\kappa_n=0\}.$$
		From \eqref{f:barh}, one sees that  $(w')^df-\bar h$ vanishes on
		$V_{\grad}(f)$.	With the same arguments as in the proof of Theorem
		\ref{thm:main}, we conclude that $r\equiv 0$. Hence, from
		$\eqref{f:barh}$,  \eqref{f:bhsos}, and \eqref{f:fhr}, we obtain a
		representation of $f$ as in \eqref{f:SOSR}, where
		$\phi_i=u_i/(w')^{d-1}$.
	\end{proof}

{In Theorem \ref{thm:4}, we assume that $\mathcal{Q} =
    ((w,\kappa_1,\dots,\kappa_n),x_1)$ is a zero-dimensional rational
    parametrization of $V_{\grad}(f)$ which is a generic assumption. In this
     assumption, the linear form  $\lambda$ is given by $\lambda(\x)=x_1$. If
      the assumption does not hold, we can change the coordinate system such
       that the obtained polynomial (with new variables) satisfies the
        assumption as in Case 2 of the proof of Theorem \ref{thm:main}.}
	\begin{remark}
		From \eqref{f:barh}, we see that $\deg \bar h$ does not exceed
		$\deg_{x_1} f+d\deg(w')$, where $\deg_{x_1} f$ is the degree of $f$ in
		the variable $x_1$ and $\deg w'=\deg w -1$. Thus, the degree of the
		univariate polynomial $\bar h$ is at most $d(d-1)^n$.
	\end{remark}

	\subsection{Algorithm to compute an SOS of rational fractions}
	From the proof of Theorem \ref{thm:SOSF}, we design an algorithm named
	$\sosgrad$ to compute the SOS decomposition of rational fractions
	for $f$. Algorithm $\sosgrad$ is obtained by  a  modification
	of Step 1 in $\sosshape$ to get a zero-dimensional
	rational parametrization of the gradient variety of $f$.

	\begin{algorithm}
		\caption{Computing SOS of rational fractions modulo the gradient
			ideal}\label{alg:SOSf}

		$\sosgrad:=\text{proc}(f)$

		\textbf{Input:} $f\in\Q[\x]$ of degree $d$ such that $f$ is
         non-negative over $\R^n$ and
		$\I_{\grad}(f)$ is zero-dimensional and radical

		\textbf{Output:}  $[w,\kappa_1,\dots,\kappa_n]$,
		$[q_1,\dots,q_s] \subset \Q[x_1]$,
		$[\phi_2, \dots, \phi_n] \subset \Q[\x]$, and $[c_1,\dots,c_s] \subset \Q_+$ satisfying
		\begin{equation}\label{eq:frat}
f=\frac{1}{(w')^d}\sum_{j=1}^sc_jq_{j}^{2}+
\sum_{i=1}^{n}\frac{\phi_i}{(w')^d}\Big(x_i-\frac{\kappa_i}{w'}\Big).
		\end{equation}

		\begin{itemize}
			\item [\rm 1:] Compute a zero-dimensional rational parametrization
			$[w,\kappa_1,\dots,\kappa_n]$ of $V_{\grad}(f)$
			\item [\rm 2:] Compute the quotients $[\phi_2,\dots,\phi_n]$ and the remainder $\bar h$ of the division of $(w')^df$ by $[x_2-\frac{\kappa_2}{w'},\dots,x_n-\frac{\kappa_n}{w'}]$ by performing $ \mathtt{Eliminate}((w')^df,w',\kappa_2,\dots,\kappa_n)$
			\item [\rm 3:] Compute a rational weighted SOS decomposition of
			$\bar h=c_1q^2_1+\dots+c_sq_s^2$
			\item [\rm 4:] Return $[w,\kappa_1,\dots,\kappa_n]$,
			$[q_1,\dots,q_s]$, $[\phi_2,\dots,\phi_n]$, and $[c_1,\dots,c_s]$
		\end{itemize}
	\end{algorithm}

	The input of $\sosgrad$ is a non-negative polynomial $f$ in
	$\Q[\x]$ whose gradient ideal $\I_{\grad}(f)$ is zero-dimensional.
	The outputs are a zero-dimensional rational parametrization of
	$V_{\grad}(f)$, a list of polynomials $[q_1,\dots,q_s] \subset \Q[x_1]$,
	and a list of rational fractions $[\phi_2,\dots,\phi_n] \subset
	\Q(x_1)[x_2,\dots,x_n]$ satisfying \eqref{eq:frat}.

	In Step 1, we compute a zero-dimensional rational parametrization
	$[w,\kappa_1,\dots,\kappa_n]$ of $V_{\grad}(f)$.
	In Step 2, we compute the quotients $[\phi_2,\dots,\phi_n]$ of the
	division of $(w')^df$ by
	$[x_2-\frac{\kappa_2}{w'},\dots,x_n-\frac{\kappa_n}{w'}]$ while using
	Algorithm $\mathtt{Eliminate}$. Note that the remainder of this division coincides with $\bar h$ given in \eqref{f:barh}.
	In Step 3, we compute a rational weighted SOS decomposition of the univariate polynomial $\bar h$ by
	relying on $\sosone$  or  $\sostwo$.

	The correctness of $\sosgrad$ is proved in a similar way as for
	$\sosshape$ in Theorem~\ref{thm:corectness}.

	\begin{theorem}
		Suppose that $f \in \Q[\x]$ is  non-negative over $\R^n$ and
		$\I_{\grad}(f)$ is zero-dimensional and radical.
		On input $f$, Algorithm $\sosgrad$ terminates and the
		outputs provide us an SOS decomposition of $f$ as in \eqref{eq:frat}.
	\end{theorem}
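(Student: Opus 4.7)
The plan is to follow the proof of Theorem~\ref{thm:corectness} for $\sosshape$ essentially verbatim, using Theorem~\ref{thm:SOSF} as the underlying existence statement instead of Theorem~\ref{thm:main}. I would first argue termination of the three main steps and then verify that the quantities computed by $\sosgrad$ are precisely those assembled in the constructive proof of Theorem~\ref{thm:SOSF}, so that the returned data satisfy the identity \eqref{eq:frat}.

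For termination, Step~1 invokes the algorithm of Corollary~\ref{cor:bit1} applied to the partial derivatives of $f$; since $\I_{\grad}(f)$ is assumed zero-dimensional and radical, this returns a zero-dimensional rational parametrization $[w,\kappa_1,\dots,\kappa_n]$ of $V_{\grad}(f)$ (after a generic linear change of variables of the form $\y = T\x$ as in Case~2 of the proof of Theorem~\ref{thm:main}, if needed to ensure that $x_1$ is a separating linear form). Step~2 is a single call to $\mathtt{Eliminate}$, whose termination is established in Lemma~\ref{lm:divshape}. Step~3 calls $\sosone$ or $\sostwo$ on the univariate rational polynomial $\bar h$; both algorithms terminate on any non-negative element of $\Q[t]$ by the results of \cite{univsos} recalled in Section~\ref{sec:pre}.

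For correctness, the key point is that the remainder produced in Step~2 coincides with the polynomial $\bar h$ introduced in \eqref{f:barh}. Indeed, $\mathtt{Eliminate}((w')^d f, w', \kappa_2,\dots,\kappa_n)$ performs the standard multivariate division of $(w')^d f$ by $[x_2-\kappa_2/w',\dots,x_n-\kappa_n/w']$ in $\Q(x_1)[x_2,\dots,x_n]$ with respect to the lex order $x_2<\cdots<x_n$, yielding quotients $\phi_2,\dots,\phi_n$ and a remainder $r(x_1)$ such that $(w')^d f = \sum_{i=2}^n \phi_i(x_i-\kappa_i/w') + r$. Substituting $x_i = \kappa_i/w'$ on both sides gives $r = (w')^d f(x_1,\kappa_2/w',\dots,\kappa_n/w') = \bar h$, which lies in $\Q[x_1]$ because $(w')^d$ clears all denominators introduced by substitution in $f$ of degree $d$. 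Theorem~\ref{thm:SOSF} ensures $\bar h$ is non-negative on $\R$, so Step~3 returns $\bar h = \sum_j c_j q_j^2$ with $q_j \in \Q[x_1]$ and $c_j \in \Q_+$ via Theorem~\ref{w_SOS}. Plugging this back into the relation from Step~2, dividing through by $(w')^d$, and re-indexing exactly as in the final paragraph of the proof of Theorem~\ref{thm:SOSF} produces \eqref{eq:frat}.

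The main obstacle in writing out the proof cleanly is the bookkeeping around the denominator $(w')^d$: one has to verify that the multivariate division of $(w')^d f$ by $[x_2-\kappa_2/w',\dots,x_n-\kappa_n/w']$ in the mixed ring $\Q(x_1)[x_2,\dots,x_n]$ carried out by $\mathtt{Eliminate}$ really matches the algebraic identity~\eqref{f:fhr} used in Theorem~\ref{thm:SOSF}, where the division is by the polynomial basis $[w'x_1-\kappa_1,\dots,w'x_n-\kappa_n]$ in $\Q[\x]$. The passage from one to the other is the division-by-$w'$ trick: multiplying each quotient $u_i$ in \eqref{f:fhr} by $1/(w')^{d-1}$ yields $\phi_i$, and the remainders agree because $w'x_1-\kappa_1$ contributes nothing modulo the other relations once we have replaced $x_1$ by itself (the parametrization is written with $\lambda = x_1$). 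Once this correspondence is spelled out, the correctness proof reduces to the same argument as for Theorem~\ref{thm:corectness}.
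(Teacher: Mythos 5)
Your proposal is correct and follows essentially the same route as the paper, which proves this theorem by the same argument as Theorem~\ref{thm:corectness}: identify the remainder of $\mathtt{Eliminate}((w')^d f, w', \kappa_2,\dots,\kappa_n)$ with $\bar h$ from \eqref{f:barh} (your substitution argument is exactly the justification), apply the SOS decomposition of $\bar h$ from Theorem~\ref{thm:SOSF}/Theorem~\ref{w_SOS}, and divide by $(w')^d$ to obtain \eqref{eq:frat}, with the non-generic case handled by a change of variables as in Case~2 of Theorem~\ref{thm:main}. The only cosmetic remark is that your direct substitution argument already makes the remainder identification self-contained, so the extra bookkeeping relating it to the division by $[w'x_1-\kappa_1,\dots,w'x_n-\kappa_n]$ in \eqref{f:fhr} is not needed.
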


	\subsection{Bit complexity analysis}
 We now	estimate the bitsizes of polynomials in the output as
	well as the number of boolean operations required to perform Algorithm
	$\sosgrad$.

	\begin{proposition}\label{prop:bitquotientsrat} Assume that $\tau$ is the
         maximum bitsize of the
        coefficients of $f$ in the input of $\sosgrad$. To compute the list
		$[\phi_2,\dots,\phi_n]$ in the output,
		Algorithm $\mathtt{Eliminate}$  runs in
		$\footnotesize{\widetilde O}\left(n^2(\tau+n+d)d^{n+1}\right)$ boolean
		operations.
		Furthermore,
		the bitsize  of $\phi_i$ is $\widetilde
		O\left(n(\tau+n+d)d^{n+1}\right)$, $i=2,\dots,n$.
	\end{proposition}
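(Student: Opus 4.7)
The plan is to reduce the statement directly to Lemma~\ref{lm:divshape} applied to the call $\mathtt{Eliminate}((w')^df, w', \kappa_2, \dots, \kappa_n)$. The only real work is to control (i) the degree of $g := (w')^d f$ in the variables $x_2, \dots, x_n$, and (ii) the bitsizes $\tau_g$ of $g$ and $\tau_a$ of the list $(w', \kappa_2, \dots, \kappa_n)$, so that we can plug into the estimates $\widetilde O(n\tau_g + n^2 d\tau_a)$ and $\widetilde O(\tau_g + nd\tau_a)$ provided by Lemma~\ref{lm:divshape}.

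First I would invoke Corollary~\ref{cor:bit1}, which yields the zero-dimensional rational parametrization $\mathcal{Q}=((w,\kappa_1,\ldots,\kappa_n),x_1)$ with the polynomials $w,\kappa_1,\ldots,\kappa_n \in \Q[x_1]$ of degree at most $(d-1)^n$ and bitsize $\widetilde O\bigl((\tau+n+d)(d-1)^n\bigr) = \widetilde O\bigl((\tau+n+d)d^n\bigr)$. Taking the derivative $w'$ preserves the bitsize up to a logarithmic factor in the degree and decreases the degree by one, so $w'$ satisfies the same asymptotic bounds. In particular, the parameter $\tau_a$ of Lemma~\ref{lm:divshape}, namely the maximum bitsize among $w',\kappa_2,\ldots,\kappa_n$, is $\widetilde O\bigl((\tau+n+d)d^n\bigr)$.

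Next I would bound $\tau_g$. Since $w'$ depends only on $x_1$, the degree of $g=(w')^df$ in the variables $x_2,\ldots,x_n$ coincides with that of $f$, hence is at most $d$; this is the parameter that plays the role of ``$d$'' in Lemma~\ref{lm:divshape}. For the bitsize, a straightforward induction on the exponent (using that the bitsize of a product of two univariate polynomials is bounded by the sum of the bitsizes plus a logarithmic factor in the smaller degree) gives that $(w')^d$ has bitsize $\widetilde O\bigl(d\cdot(\tau+n+d)d^n\bigr) = \widetilde O\bigl((\tau+n+d)d^{n+1}\bigr)$. Multiplying by $f$ only adds an additive term of order $\tau$ (plus logarithmic factors in the degree), so $\tau_g = \widetilde O\bigl((\tau+n+d)d^{n+1}\bigr)$.

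Finally, feeding the three bounds $\tau_g = \widetilde O((\tau+n+d)d^{n+1})$, $\tau_a = \widetilde O((\tau+n+d)d^{n})$, and (total) degree $d$ in $x_2,\ldots,x_n$ into Lemma~\ref{lm:divshape} gives boolean complexity
\[
\widetilde O(n\tau_g + n^2 d \tau_a) = \widetilde O\bigl(n(\tau+n+d)d^{n+1} + n^2(\tau+n+d)d^{n+1}\bigr) = \widetilde O\bigl(n^2(\tau+n+d)d^{n+1}\bigr),
\]
and bitsize of each $\phi_i$
\[
\widetilde O(\tau_g + n d \tau_a) = \widetilde O\bigl((\tau+n+d)d^{n+1} + n(\tau+n+d)d^{n+1}\bigr) = \widetilde O\bigl(n(\tau+n+d)d^{n+1}\bigr),
\]
as claimed. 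The only delicate point is the bitsize bookkeeping for the iterated product $(w')^d$; once that is done cleanly, the proposition follows directly from Lemma~\ref{lm:divshape}.
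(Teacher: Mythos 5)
Your proposal is correct and follows essentially the same route as the paper: it reduces the claim to Lemma~\ref{lm:divshape} applied to $\mathtt{Eliminate}((w')^d f, w', \kappa_2,\dots,\kappa_n)$, using the degree and bitsize bounds from Corollary~\ref{cor:bit1} to get $\tau_a=\widetilde O\bigl((\tau+n+d)d^{n}\bigr)$ and $\hht((w')^d f)=\widetilde O\bigl((\tau+n+d)d^{n+1}\bigr)$ with degree $d$ in $x_2,\dots,x_n$. Your explicit bookkeeping for the bitsize of the iterated product $(w')^d$ is a welcome detail that the paper states without justification.
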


	\begin{proof} We compute the division of $(w')^df$ by
	$[x_2-\frac{\kappa_2}{w'},\dots,x_n-\frac{\kappa_n}{w'}]$ by performing
     $\mathtt{Eliminate}((w')^df,w',\kappa_2, \dots,\kappa_n)$. We obtain the
      list
of quotients $[\phi_2,\dots,\phi_n]$ {and the remainder $\bar h$.}
   \if we are
 interested in only the  quotients. Note that when we perform
  $\mathtt{Eliminate}((w')^df-\bar h,w',\kappa_2, \dots,\kappa_n)$ and
   $\mathtt{Eliminate}((w')^df,w',\kappa_2, \dots,\kappa_n)$, we obtain the
    same quotients in the outputs. Hence, to obtain $[\phi_2,\dots,\phi_n]$, \fi
The degree of $(w')^df$ in
          $x_2,\dots,x_n$ is $d$, and  $\hht((w')^df)=\footnotesize{\widetilde
              O}\left((\tau+n+d)d^{n+1}\right)$. The
		conclusions are obtained by applying Lemma~\ref{lm:divshape} with
		$\hht(\kappa_i)=\footnotesize{\widetilde
O}\left((\tau+n+d)(d-1)^{n}\right)$.
	\end{proof}

	\begin{theorem}\label{thm:4}
{Let $f\in \Q[\x]$ of degree  $d$ and let $\tau$ be the maximum bitsize of
     its
 coefficients. Assume that $f$ is non-negative over $\R^n$ and
$\I_{\grad}(f)$ is zero-dimensional and radical. Then, on input $f$, Algorithm
$\sosgrad$ uses}
		\begin{equation}\label{f:bit}
\widetilde O\Big((d^{n+1}/2)^{3d^{n+1}/2}(\tau+n+d)d^{n+1}\Big),
		\end{equation}
		or
		\begin{equation}\label{f:bit1}
\widetilde O((\tau+n+d)d^{4n+4})
		\end{equation}
		boolean operations if in Step 3 we use Algorithm $\sosone$ or
		Algorithm $\sostwo$, respectively.
	\end{theorem}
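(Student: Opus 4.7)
The plan is to split the running time of $\sosgrad$ into contributions from its three main steps and then to identify the dominant term in each of the two variants corresponding to using $\sosone$ or $\sostwo$ in Step~3.

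For Step~1, I invoke Corollary \ref{cor:bit1} to bound the cost of computing the zero-dimensional rational parametrization $[w,\kappa_1,\dots,\kappa_n]$ of $V_{\grad}(f)$ by $\widetilde O(n^2(\tau+d)d^{2n+1}\binom{n+d}{d})$; using $\binom{n+d}{d}\leq d^{2n}$ for $n$ and $d$ large enough this simplifies to $\widetilde O((\tau+n+d)d^{4n+1})$. The cost of Step~2 is handled directly by Proposition \ref{prop:bitquotientsrat}, which contributes $\widetilde O(n^2(\tau+n+d)d^{n+1})$ boolean operations.

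The main preparatory work lies in analyzing Step~3, which requires estimates on the degree and bitsize of the univariate remainder $\bar h$. Its degree is at most $d(d-1)^n$, as noted in the remark following Theorem \ref{thm:SOSF}. For the bitsize, I would use the identity $\bar h(x_1)=(w')^d f(x_1,\kappa_2/w',\dots,\kappa_n/w')$, which after clearing denominators expresses each coefficient of $\bar h$ as a sum of products of at most $d$ factors among $w'$ and the $\kappa_i$, weighted by a coefficient of $f$. Since each of $w'$ and $\kappa_i$ has bitsize $\widetilde O((\tau+n+d)(d-1)^n)$ by Corollary \ref{cor:bit1}, standard bitsize bookkeeping, mirroring the one carried out in Proposition \ref{prop:bitsos1} for $h$ but benefitting from the significantly smaller bitsize of the $\kappa_i$ as compared with the $v_i$'s used there, yields the estimate $\hht(\bar h)=\widetilde O((\tau+n+d)d^{n+1})$.

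With these degree and bitsize bounds on $\bar h$ in hand, I would feed $\bar h$ to $\sosone$ and invoke \cite[Theorem~17]{univsos} to obtain a bound of $\widetilde O\big((d^{n+1}/2)^{3d^{n+1}/2}(\tau+n+d)d^{n+1}\big)$ for the cost of Step~3; alternatively, feeding $\bar h$ to $\sostwo$ and invoking \cite[Theorem~24]{univsos} yields $\widetilde O((\tau+n+d)d^{4n+4})$. Summing the three contributions, in the $\sosone$ variant the Step~3 term is super-exponential in $d^{n+1}$ and thus dominates the costs of Steps~1 and~2, giving \eqref{f:bit}; in the $\sostwo$ variant, Step~3's $\widetilde O((\tau+n+d)d^{4n+4})$ dominates both Step~1's $\widetilde O((\tau+n+d)d^{4n+1})$ and Step~2's $\widetilde O((\tau+n+d)d^{n+1})$, yielding \eqref{f:bit1}. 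The main technical obstacle is the bitsize control of $\bar h$ sketched above; the remainder of the proof is summation and asymptotic comparison.
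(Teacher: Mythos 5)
Your proposal is correct and follows essentially the same route as the paper: it decomposes the cost into the parametrization step (Corollary \ref{cor:bit1}), the elimination step (Proposition \ref{prop:bitquotientsrat}), and the univariate SOS step, using the same degree bound $d(d-1)^n$ and bitsize bound $\widetilde O((\tau+n+d)d^{n+1})$ for $\bar h$ before invoking \cite[Theorems 17 and 24]{univsos} and identifying the dominant term. The only difference is that you spell out the bitsize bookkeeping for $\bar h$ slightly more explicitly than the paper does, which is harmless.
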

	\begin{proof}
{From Corollary \ref{cor:bit1}, the
    polynomials $w,\kappa_1,\dots,\kappa_n$ in the zero-dimensional
     parametrization of the gradient variety $V_{\grad}(f)$ have degree at most
    $(d-1)^n$ and
    bitsize $ \footnotesize{\widetilde O}\left((\tau+n+d)(d-1)^{n}\right)$. We
     can see that the degree of the remainder $\bar h$ (as defined
    in \eqref{f:barh}) in Step 2 of $\sosgrad$ is at most $d(d-1)^{n}+d$
    and its bitsize is $\footnotesize{\widetilde
         O}\left((\tau+n+d)d^{n+1}\right)$.
      To compute an SOS decomposition of $\bar h$, by applying
      \cite[Theorem 17]{univsos} and \cite[Theorem~24]{univsos}, Algorithm
$\sosone$ and Algorithm $\sostwo$ use
\begin{equation}\label{f:bitSOSF1}
    \widetilde O\Big((
    d^{n+1}/2)^{3d^{n+1}/2}(\tau+n+d)d^{n+1}\Big)
\end{equation}
and
\begin{equation}\label{f:bitSOSF3}
    \widetilde O\left((\tau+n+d)d^{4n+4}\right)
\end{equation}
boolean operations, respectively.}

The estimates \eqref{f:bit} and \eqref{f:bit1} are obtained from Corollary
 \ref{cor:bit1}, Proposition \ref{prop:bitquotientsrat}, and the estimates
  \eqref{f:bitSOSF1} and \eqref{f:bitSOSF3}
		with the same line of reasoning as in the proof of Theorem
		\ref{thm:bit}.
		\if{
			Suppose that in Step 4 we use  $\sosone$ to compute an SOS
			decomposition
			of $\bar h$. Then, the number of boolean operations that
			$\sosgrad$ uses to compute the SOS decomposition of $f$ is
			the sum
			of: the number of boolean operations of the algorithm to computethe
			zero-dimensional rational parametrization $\mathcal{Q}$ of
			$V_{\grad}(f)$ as
			in \eqref{f:Osim2},  that to compute an SOS decomposition of $\bar
			h$
			by using
			$\sosone$  as in \eqref{f:bit}, and that to compute the
			list
			$\phi_1,\dots,\phi_n$ in the output of $\sosgrad$ by
			using
			Algorithm $\mathtt{Eliminate}$  (given in
			Proposition~\ref{prop:bitquotientsrat}). Repeating the argument in
			the
			proof of Theorem \ref{thm:bit}, we can reduce the sum and obtain the
			number as in \eqref{f:bit}.

			For the case using $\sosone$ in Step 4, \eqref{f:bit1} is
			proved
			similarly.}\fi
	\end{proof}

	\begin{theorem} Assume that $f\in \Q[\x]$ satisfies the conditions of
         Theorem \ref{thm:4}. Then,
the maximum bitsizes of the coefficients involved in the SOS
		decomposition of $\bar h$, obtained by using Algorithm
		$\sosone$ and
		Algorithm $\sosone$, are bounded from above respectively by
		\begin{equation}\label{f:bitSOSF2}
		{\widetilde O}\Big(( d^{n+1}/2)^{3d^{n+1}/2}(\tau+n+d)d^{n+1}\Big)
		\end{equation}
		and
		\begin{equation}\label{f:bitSOSF4}
		{\widetilde O}\left((\tau+n+d)d^{3n+3}\right).
		\end{equation}
	\end{theorem}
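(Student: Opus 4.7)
The plan is to parallel the proof of the analogous bitsize theorem for $\sosshape$ (the one yielding the bounds \eqref{f:bitsize1}--\eqref{f:bitsize2}). Two ingredients are needed: (i) sharp estimates on the degree and bitsize of the univariate polynomial $\bar h$ that is fed to the SOS routine in Step~3 of $\sosgrad$, and (ii) the already established coefficient bitsize bounds on the outputs of $\sosone$ and $\sostwo$ given a univariate input of prescribed degree and bitsize, namely \cite[Theorem 16]{univsos} and \cite[Theorem 23]{univsos}.

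First, I would reuse the estimates from the proof of Theorem \ref{thm:4}. By Corollary \ref{cor:bit1}, the polynomials $w,\kappa_1,\dots,\kappa_n$ in the zero-dimensional rational parametrization of $V_{\grad}(f)$ have degree at most $(d-1)^n$ and bitsize $\widetilde{O}((\tau+n+d)(d-1)^n)$. Since $\bar h$ arises in \eqref{f:barh} as $(w')^d\, f(x_1,\kappa_2/w',\dots,\kappa_n/w')$, a routine substitution bound gives $\deg \bar h \le d(d-1)^n + d = O(d^{n+1})$ and $\hht(\bar h) = \widetilde{O}((\tau+n+d)d^{n+1})$, as was already invoked in the proof of Theorem \ref{thm:4}.

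Next, following the device introduced in Proposition \ref{prop:bitsos1}, I would let $\bar\beta$ be the minimal common denominator of the non-zero coefficients of $\bar h$, so that running $\sosone$ or $\sostwo$ on $\bar h$ amounts to running it on the integer polynomial $\bar\beta\,\bar h \in \Z[x_1]$, whose bitsize is still $\widetilde{O}((\tau+n+d)d^{n+1})$. Plugging $D := \deg(\bar\beta\bar h) = O(d^{n+1})$ and $\tau_{\bar h} := \hht(\bar\beta\bar h) = \widetilde{O}((\tau+n+d)d^{n+1})$ into the coefficient bitsize estimate of \cite[Theorem 16]{univsos}, whose shape is $\widetilde{O}\bigl((D/2)^{3D/2}\,\tau_{\bar h}\bigr)$, yields exactly \eqref{f:bitSOSF2}. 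Plugging the same parameters into the bound of \cite[Theorem 23]{univsos}, whose shape is essentially $\widetilde{O}(D^{2}\,\tau_{\bar h})$, yields $\widetilde{O}\bigl(d^{2(n+1)}\cdot(\tau+n+d)d^{n+1}\bigr) = \widetilde{O}((\tau+n+d)d^{3n+3})$, matching \eqref{f:bitSOSF4}.

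The only real difficulty is bookkeeping: one must verify that in each estimate the dominant term indeed comes from the univariate SOS step applied to $\bar h$, and that all polynomial factors coming from clearing denominators and from the $\widetilde O$-notation collapse cleanly into the announced exponents $3n+3$ (for $\sostwo$) and $3d^{n+1}/2$ (for $\sosone$). This is exactly the kind of accounting already carried out for $\sosshape$ in the earlier theorem, and no new idea is required.
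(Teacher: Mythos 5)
Your proposal is correct and follows essentially the same route as the paper: it reuses the degree bound $O(d^{n+1})$ and bitsize bound $\widetilde O((\tau+n+d)d^{n+1})$ for $\bar h$ established in the proof of Theorem \ref{thm:4}, and then invokes \cite[Theorem 16]{univsos} and \cite[Theorem 23]{univsos} to obtain \eqref{f:bitSOSF2} and \eqref{f:bitSOSF4}. The only addition, clearing denominators via $\bar\beta$ as in Proposition \ref{prop:bitsos1}, is harmless and implicit in the paper's argument.
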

	\begin{proof} From the proof of Theorem \ref{thm:4}, the degree of
		$\bar h$ is at most $d(d-1)^{n}$ and the  bitsize of $\bar h$ is
		$\footnotesize{\widetilde O}\left((\tau+n+d)d^{n+1}\right)$. The
		conclusions follow from \cite[Theorem 16]{univsos} and \cite[Theorem
		23]{univsos}.
	\end{proof}

	\begin{remark}\label{rmk:compar} In general, $\sosgrad$ is
		faster than $\sosshape$ to certify  non-negativity of
		polynomials with rational coefficients.
		When relying on $\mathtt{univsos2}$, by comparing the estimates in
		\eqref{f:bitAlg41b} and \eqref{f:bit1}, we conclude that the number of
		boolean operations to run $\sosshape$ is $O(d^{2n})$
		times larger than the one of $\sosshape$.
		The underlying reason is that the maximal bitsizes of $w,v_2,\dots,v_n$
		are
		$(d-1)^{2n}$ times bigger than the ones of $\kappa_1,\dots,\kappa_n$
		that are obtained by a zero-dimensional rational parametrization of
		the gradient variety.
	\end{remark}

	To finish the section, we present an explicit SOS decomposition for the
	polynomial $f_S$ obtained from Scheiderer's polynomial given in Example
	\ref{ex:Schei}. Here, we rely on $\sosgrad$ to get the SOS
	decomposition.

	\begin{example}\label{ex:Schei2}
We first compute a zero-dimensional rational parametrization
		$\mathcal{Q}$ of the gradient variety $V_{\grad}(f_S)$:
        \begin{align*}
w & =4x_1^9+x_1^6-16x_1^5-4x_1^3-4x_1^2-1, \\
\kappa_1 & =15x_1^7-32x_1^6-9x_1^4-36x_1^3-6x_1-4, \\
\kappa_2 & =-3x_1^6+64x_1^5+24x_1^3+28x_1^2+9.
        \end{align*}

In $f_S$, by substituting $x_2=\kappa_2/w'$ as in \eqref{f:barh}, we
get the non-negative univariate polynomial {\footnotesize$\bar
h=1679616x_1^{36}+3359232x_1^{34}-559872x_1^{33}-13670208x_1^{32}+
11197440x_1^{31}-32799168x_1^{30}+7301664x_1^{29}+40124160x_1^{28}-
56581740x_1^{27}+118393488x_1^{26}-29030400x_1^{25}-11429649x_1^{24}+
91968984x_1^{23}\\-162286560x_1^{22}+52664472x_1^{21}-95470992x_1^{20}-
51948224x_1^{19}+37314854x_1^{18}-36173624x_1^{17}+\\
103156448x_1^{16}+
27660704x_1^{15}+94133752x_1^{14}+56849248x_1^{13}+51186288x_1^{12}+
42348048x_1^{11}+20765728x_1^{10}\\
+17391200x_1^9+7273168x_1^8+4607744x_1^7+
			1946186x_1^6+880960x_1^5+413632x_1^4+86580x_1^3+75816x_1^2+6561$}.

Based on Algorithm $\mathtt{Eliminate}$, we obtain the quotients of the
		division
in Step~3 of $\sosgrad$: $\phi_1=0$ and $\phi_2$ given at
\href{https://polsys.lip6.fr/~hieu/phisos.mm}{polsys.lip6.fr/$\sim$hieu/phisos.mm}.

By using $\mathtt{univsos2}$ to compute an SOS decomposition of $\bar h$,
we obtain the list $\sos$ given at above link such that
 $\bar h=\sum_{i=1}^{m}\sos[2i-1]*\sos[2i]^2$,
where $\sos[i]$ stands for the $i$-th entry of $\sos$, $m$ is the half
length of $\sos$.

Combining the above results, we obtain an SOS of rational
fractions modulo the gradient of $f_S$ as in \eqref{eq:frat}.
	\end{example}

	\section{Practical experiments}
	This section is dedicated to show experimental results obtained by using
	the algorithms $\sosshape$ (Algorithm \ref{alg:SOSdecom}
	from Section \ref{sec:pol}) and $\sosgrad$ (Algorithm
	\ref{alg:SOSf} from Section \ref{sec:rat}).
	Both algorithms are implemented in {\sc Maple}, and the results are
	obtained on
	an Intel Xeon E7-4820 CPU (2GHz) with 1.5 TB  of RAM.

	In practice, Algorithm $\sostwo$ runs faster than Algorithm
	$\sosone$, which is consistent with the theoretical results
	stated in \cite[Theorem 17]{univsos} and \cite[Theorem
	24]{univsos}.
	In addition, as mentioned in Remark \ref{rmk:compar}, it is practically
	faster to compute  SOS decompositions involving rational fractions than
	polynomials.
	We compare timings of the slowest algorithm, namely
	$\sosshape$ using $\sosone$ with the fastest
	algorithm, namely $\sosgrad$ using $\sostwo$.
	For each algorithm, the first step consists of obtaining $h$ by computing
either the shape position (using the procedure $\mathtt{Basis}$ in {\sc	Maple})
	in $\sosshape$ or the zero-dimensional rational
	parametrization (using the procedure
	$\mathtt{RationalUnivariateRepresentation}$ in {\sc Maple}) in
	$\sosgrad$. The runtime of this step is denoted by $t_h$.
	The degree and the bitsize of $h$ are denoted by $d_h$ and $\tau_h$,
	respectively.
	The second step outputs an SOS decomposition of the non-negative univariate
	polynomial $h$ by using either Algorithm $\sosone$ in
	$\sosshape$ or Algorithm $\sostwo$ in
	$\sosgrad$.
	Here, $t_{\sos}$ is the runtime of the second step and $\tau_{\sos}$ is the
	maximal bitsize of the output polynomials.

	\medskip

	\begin{center}
		{\footnotesize \begin{tabular}{ccc|c|cc|cc|c|cc|cc|}
				\cline{4-13}
				&     &   & \multicolumn{5}{c|}{$\sosshape$} &
				\multicolumn{5}{c|}{$\sosgrad$}  \\ \cline{4-13}

				&      &  &       & \multicolumn{2}{c|}{bitsize $10^6$-bits} &
				\multicolumn{2}{c|}{time (s)} &       &
				\multicolumn{2}{c|}{bitsize
					$10^4$-bits} & \multicolumn{2}{c|}{time (s)} \\ \hline
				\multicolumn{1}{|c|}{$n$} & $\tau$ & $\delta$ & $d_h$ &
				$\tau_h$& $\tau_{\sos}$
				& $t_h$ & $t_{\sos}$& $d_h$ & $\tau_h$            &
				$\tau_{\sos}$           &
				$t_h$         & $t_{\sos}$\\ 

				\multicolumn{1}{|c|}{2}   & 74   &  9 & 32    &
				0.3                 &
				8.1                & 0.1          & 2.6           & 36    &
				0.5                 & 1.6               & 0.1           & 1.8
				\\ 

				\multicolumn{1}{|c|}{3}   & 149   & 27  & 104    &
				2.4                  &
				153               & 1.1          & 781         & 108    &
				6.6
				& 13.4              & 0.2          & 13.3          \\ 

				\multicolumn{1}{|c|}{4}   &  312    & 81   & 320    &
				117                  &
				--                   & 399         & --      & 324    &
				88                &
				169                & 3.9          & 505         \\ 

				\multicolumn{1}{|c|}{5}   & 590   & 243  &    &
				&                   &   --       &        & 972    &
				940                &
				1306               & 169          & 4965         \\ \hline
			\end{tabular}

			\smallskip

			\textbf{Table 1.} Comparison results of output size and performance
			between
			Algorithm $\sosshape$ and Algorithm
			$\sosgrad$}
	\end{center}
	\smallskip

	In Table 1, we consider random polynomials of fixed degree $d=4$ with
	number of variables $n$ being between $2$ and $5$, generated as follows:
	$a^4+b_1^2+\dots+b_n^2+c+10^{6},$
	where $a$ (resp., $b_i$, $c$) is a dense linear (resp., quadratic, cubic)
	polynomial in $n$ variables. Coefficients of $a$ (resp., $b_i$, $c$) are
	chosen randomly in $\{-1,1\}$ (resp., $\{-3,\dots, 3\}$, $\{-1,0, 1\}$)
	with respect to the uniform distribution.
	For $n \geq 4$, $\sosshape$ failed to provide an SOS
	decomposition as the execution of $\sosone$ did not finish after
	12
	hours of computation, as indicated by the symbol~$-$ in the corresponding
	lines.  The underlying reason is that $\tau_h$ and $d_h$ are both very
	large
	and that the complexity of $\sosone$ is exponential in the degree
	of $h$ \cite[Theorem~17]{univsos}.
	Note that the intermediate polynomials correspond to worst cases, i.e., the
	maximal possible degree of $w$ is attained, namely $\deg w=(d-1)^n$, so the
	degree of $h$ is also maximal, i.e.,  $\deg h = d(d-1)^n-d$ (resp.
	$d(d-1)^n$) in $\sosshape$ (resp. in
	$\sosgrad$).
	For such cases, $\sosgrad$ cannot compute decompositions for
	$n\geq
	4$ (corresponding to $\deg h \geq 324$) within 12 hours.

	Next, we compare the performance of $\sosgrad$ (using
	$\sostwo$) and Algorithm $\mathtt{multivsos}$
	\cite{magron2018}.
	Recall that $\mathtt{multivsos}$ is designed to compute SOS decompositions
	of
	polynomials lying in the interior of the SOS cone.
	We report our experimental results in Table 2, obtained with seven classes
	of
	50 randomly generated polynomials.
	The random polynomials corresponding to the four first rows, with $d=4$ and
	$n=2,\dots,5$, are obtained a similar way: $a^4+b_1^2+b_2^2+c+10^{6},$
	where $a$ (resp., $b_i$, $c$) is a dense linear (resp., quadratic, cubic)
	polynomial in $n$ variables. Coefficients of $a$ (resp., $b_i$, $c$) are
	chosen randomly in $\{\pm 1,\pm 2\}$ (resp., $\{-3,\dots, 3\}$, $\{-1,\dots,
	1\}$)  with respect to the uniform distribution.
	The polynomials from the three last rows, with $d=6$ and $n=2,3,4$, are
	constructed in a similar way: $a^6+b^2+c+10^{6},$
	where $a$ (resp., $b$, $c$) is a dense linear (resp., cubic, cubic)
	polynomial
	in $n$ variables. Coefficients of $a$ (resp., $b_i$, $c$) are chosen
	randomly
	in $\{\pm 1,\pm 2\}$ (resp., $\{-3,\dots, 3\}$, $\{-1,\dots, 1\}$) with
	respect to the uniform distribution.
	Note that here the univariate polynomials generated when running the
	algorithm
	do not correspond to the worst case scenario in terms of degree and
	bitsize.
	For both algorithms, we denote by $\tau$ ($10^4$-bits) the average bitsize
	of
	the output and by $t$ the average runtime in seconds.

	\medskip
	\begin{center}

		{\footnotesize\begin{tabular}{c|ccc|c c|}
				\cline{2-6}
				&\multicolumn{3}{c|}{$\mathtt{multivsos}$}  &
				\multicolumn{2}{c|}{$\sosgrad$} \\ \hline
				\multicolumn{1}{|c|}{$d,n$} & success &  $\tau$  & $t$ &
				$\tau$      & $t$  \\ 
				\multicolumn{1}{|c|}{4,2}  & 100$\%$               & 1.3 &
				0.16      & 2     & 2       \\ 
				\multicolumn{1}{|c|}{4,3}  & 94$\%$               & 3.7 &
				0.26      & 18    & 22        \\ 
				\multicolumn{1}{|c|}{4,4}  & 38$\%$                & 8.9 &
				0.18      &   78  &   153       \\ 
				\multicolumn{1}{|c|}{4,5}  & 8$\%$                & 12.5 &
				0.32     & 234    &  630     \\ 
				\multicolumn{1}{|c|}{6,2}  & 82$\%$               & 3.5 &
				0.24      & 45    & 142          \\ 
				\multicolumn{1}{|c|}{6,3}  & 0$\%$        &
				&          & 160   &   500	  \\
				\multicolumn{1}{|c|}{6,4}  & 0$\%$
				&               &          & 744    &    4662   \\ \hline
			\end{tabular}

			\smallskip
			\textbf{Table 2.}
			Comparison of performance between Algorithm
			$\sosshape$ and Algorithm~$\mathtt{multivsos}$}
	\end{center}
	\smallskip

	From this table, we deduce that when the number of variables $n$ increases,
	then the rate of success of $\mathtt{multivsos}$ decreases. {This fact is
     compatible
     with  Blekherman's theorem \cite{blekherman2006} which says that if the
      degree $d\geq 4$ is fixed then, as the number of variables $n$ grows, the
       cone of non-negative polynomials is significantly bigger than the cone  of
        SOS polynomials.}
	When $\mathtt{multivsos}$ succeeds in computing SOS decompositions, then it
	provides more concise certificates than $\sosgrad$ while being
	more efficient.
	However, when $d=4$ and $n=5$, $\mathtt{multivsos}$ can only decompose four
	polynomials out of 50 while $\sosgrad$ succeeds for all of
	them.
	This demonstrates the need of alternative procedures such as
	$\sosgrad$ for polynomials which presumably do not lie in the
	interior of the SOS cone.

	\section*{Conclusions and perspectives}
	We designed and analyzed two algorithms to  decompose a non-negative
	polynomial
	as an SOS of polynomials/rational fractions modulo the gradient ideal with
	rational coefficients.
	The correctness of our framework relies on a generic condition, namely that
	the
	gradient ideal of the input polynomial is zero-dimensional and radical.
	We shall improve the scalability of our algorithms by exploiting the
	specific
	structure of the input polynomial, such as correlative \cite{lasserre2006}
	or
	term sparsity \cite{wang2021}, symmetries \cite{riener2018} or by using
	recent improvements on the computation of critical sets when the related
	system is invariant under group actions \cite{faugere2020}.
	Furthermore, we also plan to extend our algorithms to the constrained case
	by
	relying on polar varieties as in \cite{greuet2012}.
    \\

\section*{Acknowledgements}
This work has been supported by European Union's Horizon 2020 research and
innovation programme under the Marie Sk\l{}odowska--Curie Actions, grant
agreement 813211 (POEMA). \\


\appendix
\section*{Appendix}

\section{Proof of Corollary \ref{cor:bit1}}\label{app:bitparam}
Assume that the system of partial
derivatives $\frac{\partial f}{\partial x_1} \,,\dots, \,\frac{\partial
    f}{\partial x_n}$ is given by a straight-line program $\Gamma$ of size
$L$,
i.e., the program uses $L$ elementary operations $+, -,\times$ to evaluate
the system from variables $x_1,\dots,x_n$ and integers with bitsizes at most
$\max_{i=1}^n\{\hht(\frac{\partial f}{\partial x_i})\}$.

We claim that $L$ is $O(d\binom{n+d}{d})$. Indeed,
$f$ has at most $\binom{n+d}{d}$ terms and each term in
$f$ is defined by at most $d+1$ multiplications.  Hence, the size of a
straight-line program $\Gamma_f$ which defines $f$ does not exceed
$(d+1)\binom{n+d}{d}$. By applying Baur-Strassen Theorem \cite[Theorem
1]{baur1983}, the size $L$ is $O(d\binom{n+d}{d})$.

Recall that  $\hht(\frac{\partial f}{\partial x_i}) \leq \log
d+\hht(f)=\log d+\tau$, for $i=1,\dots,n$.
By applying \cite[Corollary 2]{safey2018} for the system and a single
group of variables, there exists an algorithm that takes the system as
in input, and that produces one of the outputs given as in items \rm{a)}--
\rm{c)} of Corollary \ref{cor:bit1}. The number of boolean operations of
the
algorithm is $\footnotesize{\widetilde O}\Big(n^2d^{2n}(\log d+\tau+(d-1))
(d\binom{n+d}{d}+n(d-1)+n^2)\Big)$.
Reduce this formula, we get \eqref{f:Osim2}. Furthermore,
the polynomials in the output have degree at most $(d-1)^n$ and
bitsize $\footnotesize{\widetilde O}\left((d-1)^{n}(\log
d+\tau+n+(d-1))\right)= \footnotesize{\widetilde
    O}\left((d+\tau+n)d^{n}\right)$
as claimed.

\section{{Proof of the bit complexity
        in Lemma} \ref{lm:parame}}\label{app:q}
From Corollary~\ref{cor:bit1},  the degree of $w$ is at most
$(d-1)^n$, and
then $\deg w'$ is at most $(d-1)^n-1$. Assume that $\beta$ is the
positive
minimal common denominator of all non-zero coefficients of $w$. Then,
$\beta w$ and $\beta w'$ belong to $\Z[t]$. Clearly,
$\deg(\beta w')=\deg(\beta w)-1$, $\deg(\beta w)\leq (d-1)^n$, and the
bitsize
of $\beta w$ and $\beta w'$ are bounded by
$\footnotesize{\widetilde O}\left((d+\tau+n)(d-1)^{n}\right)$. We can
apply \cite[Theorem 6.52]{gathen2013} to $\beta w$ and $\beta w'$. The
extended Euclidean
algorithm computes the B\'{e}zout coefficient, denoted by $b$, of $ \beta
w'$ using
\begin{equation}\label{Obezout}
    \widetilde O(\tau+n+d)^2(d-1)^{6n})
\end{equation}
boolean operations. The bitzise of $b$ is bounded by
\begin{equation}\label{taub}
    O\left((\tau+n+d)(d-1)^{2n}\right).
\end{equation}
Furthermore, one sees that the degree of  $b$ satisfies
\begin{equation}\label{degb}
    \deg b\leq \deg w-\deg \gcd(w,w')= \deg w\leq (d-1)^n.
\end{equation}

For every $i=2,\dots,n$, we will estimate the bitsize of the
polynomial
$b\kappa_i$. Recall from Corollary \ref{cor:bit1} that
$\deg \kappa_i \leq (d-1)^{n}$, hence from \eqref{degb} one has $\deg
b\kappa_i \leq 2(d-1)^{n}$. From \eqref{taub}, we obtain
$$\hht(b\kappa_i)\leq \hht(b)+\hht(\kappa_i)= \footnotesize{\widetilde
    O}((\tau+n+d)(d-1)^{2n})+\footnotesize{\widetilde
    O}((\tau+n+d)(d-1)^{n}).$$
After simplifying the last estimate, the bitsize of $b\kappa_i$ is
bounded from
above by $\footnotesize{\widetilde O}((\tau+n+d)(d-1)^{2n})$. Hence,
the
bitsize of $\eta b\kappa_i$, where $\eta $ is the minimal common
denominator
of all non-zero coefficients of $b\kappa_i$, can be estimated as
follows
$$\hht(\eta b\kappa_i)\leq 2\hht(b\kappa_i) \leq
\footnotesize{\widetilde
    O}((\tau+n+d)(d-1)^{2n}).$$

In the proof of Lemma \ref{lm:parame}, we considered the division of
$b\kappa_i$ by $w$ and defined $v_i= b\kappa_i \mod w$. Thus, the
degree of
$v_i$ is at most $\deg w\leq (d-1)^n$. From Lemma~\ref{lm:div1}, the
Euclidean
division algorithm computes $v_i$ using at most
\begin{equation}\label{Odiv}
    \widetilde O((\tau+n+d)(d-1)^{5n})
\end{equation}
boolean operations. Thus, the bitsize of $v_i$ is
$\footnotesize{\widetilde O}((\tau+n+d)(d-1)^{3n})$, for $i=2,\dots,n$.
Therefore, computing $[w,v_2,\dots,v_n]$ from the zero-dimensional
rational
parametrization $\mathcal{Q}$ of $V_{\grad}(f)$, requires
$$\footnotesize{\widetilde
    O}((\tau+n+d)^2(d-1)^{6n}+(n-1)(\tau+n+d)(d-1)^{5n})$$
boolean operations, as a consequence of \eqref{Obezout} and
\eqref{Odiv}.
By applying further simplification, we obtain the desired result
\eqref{f:compv}.

\bigskip

	The bit complexity results of the two division algorithms used in Lemma
	\ref{lm:div1} and Lemma~\ref{lm:divshape} are basic but we could not find
	their proofs in the literature. Here we state these two algorithms and
	prove estimates for their bit complexities.

\section{Proof of Lemma \ref{lm:div1}}\label{app:lm:div1} Assume that $a,b$ are polynomials in $\Z[t]$  with $\deg a=d\geq\deg b=m$
and that $\hht(a)$, $\hht(b)$ are bounded from above by $\tau$. We recall
the Euclidean division algorithm in Algorithm \ref{alg:Euclid}
\cite[Algorithm~2.5]{gathen2013} to compute the quotient $q$ and the
remainder $r$ of the division of $a$ by $b$, i.e., $a=qb+r$ with $\deg
r< \deg b$.

	\begin{algorithm}
		\caption{Euclidean division algorithm}\label{alg:Euclid}
		\textbf{Input:} polynomials $a,b\in \Z[t]$

		\textbf{Output:} polynomials $q,r\in \Q[t]$ such that $a=qb+r$ and
		$\deg r< \deg b$

		\begin{description}
			\item[\rm 1:] Let $q:=0$ and $r:=a$
			\item[\rm 2:] While $\deg r \geq \deg b$ do
			\begin{enumerate}
				\item[\rm 3:] Let $h:=\lc(r)/\lc(b) t^{\deg r - \deg b}$
				\item[\rm 4:] Let $q:=q+h$
				\item[\rm 5:] Let $r:=r-hb$
			\end{enumerate}

			\item[\rm 6:] Return $q$ and $r$
		\end{description}
	\end{algorithm}



	We denote by $r_i$ (resp. $q_i$, $h_i$) the value of $r$ (resp. $q$, $h$) at
	the $i$-th iteration of the while loop from Step 2.
	The initial values are $q_1=0$ and $r_1=a$. After each iteration of the
	while loop, the degree of $r$ is strictly decreasing. Hence, the while
	loop will terminate after at most $d-m$ iterations.
	When the while loop terminates, the values are $q=q_{d-m}$ and $r=r_{d-m}$.
	From Step 3, we observe that $\hht(h_i)\leq \hht(b)+\hht(r_i) =
	\tau + \hht(r_i)$, and the number of boolean operations to perform this
	operation is bounded by $\tau+\hht(r_i)$. Since $q_{i+1}=q_{i}+h_i$ and
	$h_i$ is a monomial satisfying $\deg h_i> \deg q_{i}$, one has
	\begin{equation}\label{f:tauqmax}
		\hht(q_{i+1})\leq \max\{\hht(q_{i}),\tau+\hht(r_i)\} \,,
	\end{equation}
	and the number of boolean operations to perform the operation in Step 4 is
	bounded by $O(1)$.
	For the operation in Step 5, since $\hht(hb)\leq 2\tau+\hht(r_i)$, the
	bitsize of $r_{i+1}$ is bounded by $2\tau+\hht(r_i)$. Moreover, the number
	of boolean operations to compute $h_ib$ is $\footnotesize{\widetilde
		O}(m(\tau+\hht(r_i)))$, so $r_{i+1}$ is also computed in
	$\footnotesize{\widetilde O}(m(\tau+\hht(r_i)))$ boolean operations.

	We get the recurrence formula $\hht(r_{i+1}) \leq \hht(r_{i}) + 2 \tau$,
	for each $i=0,\dots,d-m$, with $\hht(r_0)=\tau$.
	It follows that $\hht(r_i) \leq 2i\tau+\tau$, for each $i=0,\dots,d-m$.
	This yields $$\hht(r)=\hht(r_{d-m}) \leq 2(d-m)\tau + \tau=O(\hht(d-m)).$$
	From \eqref{f:tauqmax}, the bitsize of $q=q_{d-m}$ is also bounded by
	$O(\hht(d-m))$. Furthermore, the number of boolean operations to perform
	the algorithm is
	$$\sum_{i=0}^{d-m} \footnotesize{\widetilde O}((i+1)2m\tau)
	=\footnotesize{\widetilde O}(m\hht(d-m)^2).$$
	This yields the desired estimates.

	\section{{Algorithm $\mathtt{Eliminate}$ and the proof of Lemma \ref{lm:divshape}}}\label{app:divshape}

    \noindent {\sc {Algorithm $\mathtt{Eliminate}$}}.
    Let us consider $g \in \Q[x_1][x_2,\dots,x_n]$, with $\deg g=d$ (in variables $x_2,\dots,x_n$) and $\hht(g)=\tau_g$, and the list of rational fractions:
	$$ G = [x_2-\frac{a_2}{a_0},\dots,x_n-\frac{a_n}{a_0}],$$
	where $a_0, a_2,\dots,a_n$ are polynomials in $\Q[x_1]$, $a_0\neq 0$, and $\hht(a_i)\leq \tau_a$
    for $i=0,2,\dots,n$.
	Recall that $\Q(x_1)$ is the field of
	rational fractions in variable $x_1$ with coefficients in  $\Q$.
    Let $x_2<\dots<x_n$ be a lexicographic monomial order
	 on $\Q(x_1)[x_2,\dots,x_n]$.
	Algorithm $\mathtt{Eliminate}$ outputs the  quotients
    $\phi_2,\dots,\phi_n \in
	\Q(x_1)[x_2,\dots,x_n]$ {and the remainder $r\in \Q(x_1)$} of the
	multivariate division of $g$
	by the list $G$ satisfying \begin{equation}\label{f:gphi}
		g=\sum_{i=2}^n\phi_{i}(x_i-\frac{a_i}{a_0})+r.
	\end{equation}
\if Instead of directly operating on $\frac{1}{b},a_2,\dots,a_n$, we use $n$
	new variables $z_0,z_2,\dots,z_n$ that will be replaced respectively by
	$\frac{1}{a_0},a_2,\dots,a_n$  in Step 5. \fi

	\begin{algorithm}\caption{Elimination algorithm}\label{alg:divshap}

		$\mathtt{Eliminate}:=\text{proc}(g,a_0,a_2,\dots,a_n)$

		\textbf{Input:} $n+1$ polynomials $g \in \Q[x_1][x_2,\dots,x_n]$, $a_0,a_2,
		\dots,a_n\in \Q[x_1]$

		\textbf{Output:} $\phi_2,\dots,\phi_n$ in
		$\Q(x_1)[x_2,\dots,x_n]$ {and $r\in \Q(x_1)$} satisfying \eqref{f:gphi}

		\begin{description}
			\item [\rm 1:] Set $r_{n+1}:=g$
			\item [\rm 2:] For $i=n$ to $2$ do
			\begin{description}
				\item[\rm 3:] Compute
                 $\phi_i:=\quo(r_i,x_i-\frac{a_i}{a_0},x_i)$
				\item[\rm 4:] Substitute $x_i$ by $\frac{a_i}{a_0}$ in
                 $r_{i+1}$ to define
                  $r_{i}:=r_{i+1}(x_1,\dots,x_{i-1},\frac{a_i}{a_0})$
			\end{description}
            \item [\rm 5:] {Set $r:=r_2$}
			\item [\rm 6:] {Return $\phi_2,\dots,\phi_n$, and $r$}
		\end{description}
	\end{algorithm}

	In Step 3,  $\phi_i$ is the quotient of the univariate division (in the
	variable $x_i$) of $r_i$ by $x_i-\frac{a_i}{a_0}$. Since the degree of
     $x_i$ in $x_i-\frac{a_i}{a_0}$ is 1, $\phi_i$ belongs to
      $\Q(x_1)[x_2,\dots,x_i]$.
	The remainder $r_i$ of the division in Step 3 is given in Step 4 after
	replacing $x_i$ by $\frac{a_i}{a_0}$ in $r_{i+1}$; hence one has $r_i\in \Q(x_1)[x_2,\dots,x_{i-1}]$.  After Steps
	3-4, we obtain
	\begin{equation}\label{eq:ri}
		r_i=\phi_i\Big(x_i-\frac{a_i}{a_0}\Big)+r_{i-1}.
	\end{equation}
	Therefore, after Step 5, we get
	$g=\sum_{i=2}^n\phi_{i}(x_i-\frac{a_i}{a_0})+r$, with  $r\in \Q(x_1)$.
	Based on Buchberger's Criterion \cite{buchberger1965}, we can show that the
	system of $n-1$ polynomials
	$[x_2-\frac{a_2}{a_0},\dots,x_n-\frac{a_n}{a_0}]$  is a Gr\"{o}bner basis
	of the ideal generated by this system w.r.t.~the order $<_{lex}$ in
	$\Q(x_1)[x_2,\dots,x_n]$. Hence, $\phi_2,\dots,\phi_n$
	are defined uniquely. The correctness of the
	algorithm is proved. \\

\noindent {\sc {The proof of Lemma \ref{lm:divshape}}}.	Now we estimate the bitsizes of $\phi_i$, for $i=2,\dots,n$.
	From the definition of $r_i$ in Step~4, one sees that
    $\hht(r_{i})\leq \hht(r_{i+1})+2d\tau_a$.
	Since  $\hht(r_{n+1})=\tau_g$, the bitsize
	of $r_{i}$ is bounded from above by $\tau_g+2(n-1)d\tau_a$. The relation
	\eqref{eq:ri} leads to
	$\hht(\phi_i)\leq \hht(r_{i+1}-r_{i})+\hht(x_i-\frac{a_i}{a_0})$. Because of
	$\hht(r_{i+1}-r_{i})\leq \max\{\hht(r_{i+1}),\hht(r_{i})\}$, and
	$\hht(\frac{a_i}{a_0}) \leq 2\tau_a$, we get $ \hht(\phi_i)\leq
	\tau_g+2(nd-d+1)\tau_a$. It follows that
	$\hht(\phi_i)=\footnotesize{\widetilde O}(\tau_g+nd\tau_a)$.

	We see 	that the number of boolean operations to perform Steps 3 and 4 are
	$\footnotesize{\widetilde O}(\tau_g+nd\tau_a)$ and $O(1)$, respectively.
     The for loop in Step
	2 has $n-1$ steps. Therefore, the number of boolean operations to
	perform the loop is $\footnotesize{\widetilde
		O}(n\tau_g+n^2d\tau_a)$.  This is also the number of boolean
operations that Algorithm $\mathtt{Eliminate}$ uses.


\end{document}